\theoremstyle{plain}
\newtheorem{theorem}{Theorem}
\newtheorem{lemma}[theorem]{Lemma}
\newtheorem{fact}[theorem]{Fact}
\newtheorem{corollary}[theorem]{Corollary}
\theoremstyle{definition}
\theoremstyle{remark}
\theoremstyle{remark}
\begin{document}

\title{\bf Graphs with large chromatic number induce $3k$-cycles\thanks{Work partially 
done during the 2014 Barbados Workshop on Structural Graph Theory}}

\author{Marthe Bonamy\thanks{Universit\'e Montpellier 2 - CNRS, LIRMM
    161 rue Ada,
    34392 Montpellier, France.
    Email: \texttt{bonamy@lirmm.fr} Partially supported by the ANR Project \textsc{EGOS} 
    under \textsc{Contract ANR-12-JS02-002-01}}
  \and Pierre Charbit\thanks{LIAFA - Universit\'e Paris Diderot - 
Paris 7 - Case 7014 - F-75205 Paris Cedex 13. Email: \texttt{pierre.charbit@liafa.univ-paris-diderot.fr}}
  \and St\'ephan Thomass\'e \thanks{LIP, UMR 5668, ENS Lyon - CNRS - 
UCBL - INRIA, Universit\'e de Lyon, France.
   Email: \texttt{stephan.thomasse@ens-lyon.fr}
  Partially 
supported by the ANR Project \textsc{Stint} under \textsc{Contract ANR-13-BS02-0007}
}
}

\maketitle

\begin{abstract}
Answering a question of Kalai and Meshulam, we prove that graphs without
induced cycles of length $3k$ have bounded chromatic number. This implies the very 
first case of a much broader question asserting that every graph with 
large chromatic number induces a graph $H$ such that the sum of the Betti
numbers of the independence complex of $H$ is also large.

\end{abstract}

\section{Introduction}

One of the simplest results in graph theory is
that any graph without odd cycles is bipartite. To 
the contrary, one of the hardest results (the 
celebrated strong perfect graph theorem \cite{crst}) asserts that forbidding 
odd induced cycles of length at least five and their complements
produces graphs where the chromatic number equals the clique number.
Parity of cycles is indeed strongly related to coloring, and many
questions are still open in this field, mainly because the following
problem is still not really understood: What is the general picture
of graphs having large chromatic number and small clique number?

In particular, if one considers a triangle-free graph $G$ with
large chromatic number, the general feeling
is that $G$ must be 'complex' in the sense that it contains 
all sorts of induced substructures. Unfortunately, the vast majority 
of natural questions one can ask in this direction remains unsolved. It is a shame that 
even proving in this case the existence of an induced cycle of length at least
say 10 (to be provoking) is still wide open.

This paper investigates induced cycles of length 
0 modulo 3. To avoid confusion with parity, it 
was convenient for us to speak about the {\it trinity} 
of an integer, which is its residue modulo 3. By extension, the
{\it trinity} of a path or a cycle is the trinity of its length. 
Finally, a {\it trinity graph} is
a graph which does not induce trinity 0 cycles. Observe that trinity graphs 
have clique number at most 2. The goal of this paper 
is to show that trinity graphs have bounded
chromatic number.

Studying the trinity of induced cycles in a graph may appear to be an exotic 
goal, so let us give some motivation for this. A teasing result (distantly related to our
problem) where trinity of cycles plays a crucial role is the following: Assume that 
$G$ is a (4-regular) graph on $n$ vertices whose set of edges is partitioned into 
two cycle-factors $F_1$ and $F_2$. Assume moreover that $F_1$
is a union of triangles. Then if all cycles in $F_2$ have trinity 
0 or 2, there exists a stable set $S$ of size $n/3$, hence
meeting every triangle in $F_1$ exactly once. The existence 
of $S$ is not constructive, and for instance no polynomial 
algorithm to find $S$ is known even when $F_2$ simply consists 
of a disjoint union of $C_5$.

So what is so special about trinity 0 and 2 cycles? The answer,  
given by Aharoni and Haxell \cite{ah}, can be found 
in the hypergraph consisting of all 
stable sets of $G$, or equivalently, the {\it independence complex} 
${\cal I}(G)$ of $G$. To give a very light intuition, observe that 
${\cal I}(C_4)$ consists of two disjoint intervals, ${\cal I}(C_5)$ 
consists of one cycle, and ${\cal I}(C_6)$ consists of two 
(plain) triangles, attached by three edges. Here the important parameter 
$\eta ({\cal I}(G))$ is the dimension of the smallest 'hole' in ${\cal I}(G)$
(more precisely the first non trivial homology group), where 
$\eta =0$ when the complex is not connected, $\eta =1$ when the complex is connected but
not simply connected, etc. The crucial observation is that $\eta ({\cal I}(C_k))+1$
is at least $k/3$ when the trinity of $k$ is 0 or 2, and strictly 
less than $k/3$ when the trinity of $k$ is 1. In particular, the topological connectivity
of the independent complex of cycles of trinity 1 is not large enough to 
give an independent set of representative
(hence a stable set of size $n/3$) as in the
Aharoni-Haxell theorem \cite{ah}.

This is part of the answer: trinity of cycles is a key-parameter when 
considering the independence complex of a graph. But can we go further than 
simply proving the existence of (linear) stable sets? Do some properties of
${\cal I}(G)$ directly give bounded chromatic number? More precisely, 
if we ask that ${\cal I}(G)$, and all ${\cal I}(H)$ for $H$ induced 
in $G$ are 'simple enough', is it true that $G$ has bounded chromatic number?

This idea was developed by Gil Kalai and Roy Meshulam, and the parameter they 
proposed is to consider, for a given graph $H$, the sum $bn(H)$ of all reduced Betti numbers 
of ${\cal I}(H)$ (i.e. the sum of the number of independent holes in each dimenion, or 
more precisely the sum of the ranks of all homology groups). They conjectured
that if $G$ has large chromatic number, then one of its induced subgraphs
$H$ has large $bn(H)$. Observe that large cliques have in particular large 
parameter $bn$, and that, if true, this conjecture would imply the 
existence of a ``complex`` induced subgraph (at least with respect to
some parameter which is typically large for complete graphs). This would be 
a milestone considering how poor is our knowledge of chromatic number.

Going back to our toy-examples, one can notice that ${\cal I}(C_6)$ has two non-equivalent 
(1-dimensional) holes, while ${\cal I}(C_4)$ and ${\cal I}(C_5)$ first 
non trivial homology groups have rank 1. This remark generalizes as follows:
the (unique) non trivial homology group of ${\cal I}(C_n)$ has rank 1
if $n$ has trinity 1 or 2, otherwise it has rank 2. 
Therefore, a graph only inducing graphs $H$ with $bn(H)\leq 1$ does not
have induced cycles of length $3k$. 
Hence, if one wants to show the first 
nontrivial case of the Kalai-Meshulam
conjecture, i.e. that there exists an induced subgraph $H$
with $bn(H)>1$, it would suffice in particular to show that every graph 
with large chromatic number has an induced $3k$-cycle. 
This is the goal of this paper.

We do not make any attempt to provide an 
explicit bound, in order to avoid tedious computations 
which would hide the quite simple general idea of the proof. To achieve this, we will only 
characterize the numbers we use in our proof as {\it bounded} or {\it large}, 
where large will be taken in the sense of 'arbitrarily large'. Hence in our
proof, the output of any unbounded increasing function applied 
to 'large' remains 'large'. Note that a parameter which is not arbitrarily 
large is bounded, so every parameter we use in this paper 
is either large or bounded. 

We did not tried to figure out the bound we can derive from our method,
but we are pretty confident that its decimal expansion should easily fit in one line. However, we
feel a little bit sorry since the threshold seems to be 4, as suggested 
by the existing bound when excluding trinity 0 cycles as subgraphs.
Indeed Chudnovsky, Plumettaz, Scott and Seymour \cite{css} proved that every 
graph with chromatic number at least four has a (not necessarily induced) trinity 0 cycle.

Let $X$ and $Y$ be some disjoint sets of vertices of some 
graph $G$. We say that $X$ and $Y$ are \textit{independent} if there 
is no edge between them, otherwise we say that $X$ {\it sees} $Y$. 
We say that $X$ has {\it private neighbors} in $Y$
if for every $x\in X$ there is a neighbor $y$ of $x$ in 
$Y$ such that $y$ only sees $x$ in $X$.
We say that $X$ {\it dominates} $Y$ if every vertex of 
$Y$ has a neighbor in $X$. Moreover $X$ {\it minimally dominates} $Y$ if 
$X$ {dominates} $Y$ and no $S \subsetneq X$ {dominates} $Y$. Observe 
that if $X$ {minimally dominates} $Y$, then $X$ has {private neighbors} in $Y$.

Let $G$ be a connected graph and $r$ be some vertex of $G$.
We define the {\it iterated neighborhoods} of $r$ as $N_0(r)=\{r\}$, $N_1(r)=N(r), \ldots$, 
and drop the parameter when there is no ambiguity. We say that $N_\ell$ is the {\it level} $\ell$. 
We refer to $N_0\cup N_1\cup \ldots \cup  N_\ell$ as the \textit{levels 
at most $\ell$}. If we are discussing $N_\ell$, then the \emph{upper level} 
is $N_{\ell-1}$, and similarly the \emph{lower level} is $N_{\ell+1}$.
If $x,y$ are vertices at equal distance to $r$ (or equivalently,
belonging to the same level $N_{\ell}$),
we denote by $U_{xy}$ any induced $xy$-path with internal 
vertices belonging to levels $N_{i}$ with $i<\ell$, and such that 
$U_{xy}\cap N_i$ has at most 2 vertices for all $i$. To form 
$U_{xy}$, one can take a shortest $xr$-path $P$, a shortest 
$yr$-path $Q$, and extract a shortest $xy$-path inside $V(P\cup Q)$.
Here $U_{xy}$ stands for ``up-path'' from $x$ to $y$. We also 
denote by $u_{xy}$ the trinity of the path $U_{xy}$, and more generally,
when we consider some paths $P_{xy}$ or $T_{xy}$, their respective
trinities are denoted by $p_{xy}$ or $t_{xy}$.

\section{Trinity Changing Paths}

A {\it trinity changing path} (TCP) is a sequence of graphs such 
that we can go through each graph with
two induced paths with different trinities. Formally, a TCP of {\it order} $k$ is
obtained by considering a sequence of pairwise disjoint and non-adjacent 
graphs $G_1,\dots ,G_k$ and vertices $x_1,y_1,\dots ,x_k,y_k$, where $x_i,y_i$ 
are in $G_i$, and then identifying every vertex $y_i$ with $x_{i+1}$ when
$i=0,\dots ,k-1$. 
Moreover, in each $G_i$, where $1\leq i\leq k$, there exist
two induced $x_iy_i$-paths with different trinities. In particular $x_i\neq y_i$ and 
$x_iy_i$ is not an edge. The vertex $x_1$ is called the {\it origin} of the TCP
and every $G_i$ is a {\it block}.

\begin{theorem}\label{th:TCP}
Every vertex $r$ in a connected trinity graph $G$ with large chromatic number is the origin 
of a large order TCP.
\end{theorem}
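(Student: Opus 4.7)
The plan is to induct on $k$: we show there is an increasing function $f$ such that if $\chi(G)\ge f(k)$ and $r\in V(G)$, then $r$ originates an order-$k$ TCP. The case $k=0$ is vacuous.

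For the inductive step, I would first run BFS from $r$ to obtain the iterated neighbourhoods $N_0=\{r\},N_1,\dots$. Since distinct even levels (and distinct odd levels) are pairwise non-adjacent in $G$, one has $\chi(G)\le 2\max_\ell\chi(G[N_\ell])$, so some level $N_\ell$ has large chromatic number. Passing to a connected component $C$ of $G[N_\ell]$ with large chromatic number (itself a trinity graph), the induction hypothesis applied to $G[C]$ at any chosen vertex $v\in C$ produces an order-$(k-1)$ TCP $T'$ originating at $v$ and contained entirely in $C\subseteq N_\ell$; this $T'$ will supply the blocks $G_2,\dots,G_k$ of the target TCP.

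It remains to construct the first block $G_1$ with $x_1=r$ and $y_1=v$: two induced $rv$-paths of different trinities. One is a shortest $rv$-path $P$ of length $\ell$ and trinity $\ell\bmod 3$. To produce the second, I would use a neighbour $v'$ of $v$ inside $C$: concatenating a shortest $rv'$-path with the edge $v'v$ gives an $rv$-walk of length $\ell+1$ and complementary trinity, and after pruning chords it becomes an induced $rv$-path. The trinity-graph hypothesis---forbidding induced $3k$-cycles in the theta subgraph formed by $P$, the new path, and any chord configuration between them---should, via a pigeonhole over trinities of shortest up-paths across the many vertices of $C$, guarantee the existence of a pair $(v,v')$ producing the required trinity shift.

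The main obstacle is ensuring that $G_1$ is non-adjacent to the tail TCP $T'$. The interior of $G_1$ lies in lower BFS levels $N_0,\dots,N_{\ell-1}$, and a vertex $w$ in such a lower level can have many neighbours in $N_\ell$. However, trinity graphs are triangle-free, so $N(w)\cap N_\ell$ is independent in $G[C]$, and its removal drops the chromatic number of $C$ by at most one. By inflating the threshold $f(k)$ with a safety margin and stripping the neighbourhoods of the lower-level vertices that will appear in $G_1$ \emph{before} invoking the induction hypothesis to extract $T'$, one arranges that $T'$ lies outside $N(G_1\setminus\{r,v\})$. Implementing this cleaning step coherently across all levels, and simultaneously certifying that the second $rv$-path in $G_1$ really has trinity distinct from $\ell\bmod 3$, is where the technical weight of the argument will concentrate.
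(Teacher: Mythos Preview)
Your inductive framework and the BFS reduction to a level $N_\ell$ of large chromatic number are fine, and in fact your two-path construction is cleaner than you seem to realize: if $v,v'$ are adjacent in $C\subseteq N_\ell$ and $P,P'$ are shortest $rv$- and $rv'$-paths, then $Q:=P'\cup\{v'v\}$ is automatically an induced $rv$-path of length $\ell+1$, because any chord from $v$ to the vertex $P'\cap N_{\ell-1}$ would create a triangle on $v,v',P'\cap N_{\ell-1}$. So no pigeonhole is needed, and your ``pigeonhole over trinities of shortest up-paths'' is in any case vacuous: all shortest up-paths from $N_\ell$ have the same trinity $\ell\bmod 3$.

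The real gap is in the cleaning step, and it is not merely bookkeeping. First, your claim that the interior of $G_1$ lies in $N_0,\dots,N_{\ell-1}$ is false: $v'\in N_\ell$ is an interior vertex. More importantly, the vertices of $G_1\setminus\{v\}$ that can see $C\subseteq N_\ell$ are exactly $p:=P\cap N_{\ell-1}$, $q:=P'\cap N_{\ell-1}$, and $v'$. You propose to strip their neighbourhoods from $C$ before extracting $T'$, but you never argue that $v$ still sees a connected large-chromatic piece afterwards. Triangle-freeness does give $N(v)\cap N(p)\cap N_\ell=\emptyset$ and $N(v)\cap N(v')\cap N_\ell=\emptyset$, but nothing prevents every $C$-neighbour of $v$ from lying in $N(q)$ (note $q$ is not adjacent to $v$, so no triangle arises); in that case $v$ becomes isolated in the stripped graph and the induction hypothesis cannot be applied at $v$. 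Choosing $v,v'$ first and stripping afterwards is circular; choosing a large piece first and then $v,v'$ gives you no control over $q$. This is not a detail one can defer---it is the substance of the theorem.

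The paper's proof is built around exactly this difficulty and takes a genuinely different route. Rather than separating after the fact, it first grows (via the standard Gy\'arf\'as argument) an induced path $12345678$ inside $C$ together with a ``storage'' set $X\subseteq C$ of large chromatic number non-adjacent to $1234567$. It then looks at a dominating set $S\subseteq N_{\ell-1}$ of this configuration and runs a short case analysis (an edge in $S$; otherwise a ``clean'' pair $i,i+2$ with private neighbours $x,y$ in $S$) to manufacture the block so that its $N_{\ell-1}$-vertices $x,y$ are \emph{certified by the case analysis} to miss $X$ (or a large-chromatic piece of it). The block ends at a vertex on the Gy\'arf\'as path, which leads straight into the storage, so iteration proceeds with no stripping needed. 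The trinity difference in the block comes not from one extra edge in $N_\ell$, but from the combinatorics of which vertices of $12345678$ the private neighbours $x,y$ see---which is precisely the work you postponed to ``where the technical weight of the argument will concentrate''.
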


\begin{proof}
We just show how to grow the first block from $r$.
Let $\ell$ be such that the chromatic
number of $G[N_\ell]$ is large. Let $G'$ be some connected component of
$G[N_\ell]$ with large chromatic number. Inside $G'$,
we choose a vertex 1, and grow an induced path 12345678 in $G'$ in such a way 
that there exists a subset of vertices $X$ in $G'$ nonadjacent to 
1234567, the vertex 8 has a unique neighbor 9 in $X$, and $X$ induces 
a connected graph with large chromatic number. We call $X'$ the graph 
induced by 12345678 union $X$. To construct $X'$,
we apply the classical argument of Gy\'arf\'as~\cite{Gyar87} showing that every vertex
in a large chromatic triangle-free connected graph is the origin 
of a large induced path: Start by vertex 1 and remove $N(1)$ from
$G'$. Since $N(1)$ is a stable set, there is a connected component $G''$ of large chromatic number
in $G'\setminus N(1)$. Then pick $2\in N(1)$ which sees $G''$. Now
start from 2 inside $2\cup G''$ and proceed as done previously from 1 in $G'$.
Iterating this process gives the path 12345678 and the set $X$ which 
is used as a 'storage' of large chromatic number. 

We denote by $S$ a dominating set of $X'$ in $N_{\ell-1}$.

\begin{fact}\label{fact:Sstable}
If $S$ is not a stable set, we can grow our first block.
\end{fact}

\begin{proof}
Let $xy$ be an edge of $S$. We consider $X''$ to be a connected component
of $X'\setminus (N(x)\cup N(y))$ with large chromatic number.
Let $z$ be a vertex of $X'\setminus X''$ with at least one neighbor in $X''$. 
By construction $z$ is a neighbor of $x$ or $y$, but not of both since that 
would induce a triangle: assume w.l.o.g. that $xz$ is an edge. Let $P_x$ and $P_y$
be two shortest paths from $r$ to $x$ and from $r$ to $y$. Our first block consists
of $P_x\cup P_y\cup z$. Note that the two paths $P_xz$ and $P_yxz$ have 
different trinities, and that $P_yxz$ is indeed induced since there is no triangle. 
\end{proof}

From now on, we assume that $S$ is a stable set. Let us say that two vertices 
$i$ and $i+2$ in $1234567$ are {\it clean} if $\{i,i+2\}$ has private neighbors in $S$.

\begin{fact}\label{fact:clean}
There exist clean vertices $i$ and $i+2$.
\end{fact}

\begin{proof}
First observe that if $1,3,5$ have a common neighbor $x$ in $S$, then $2,4$ 
must be clean, otherwise a common neighbor $y$
of $2,4$ would give the $6$-cycle $1x54y2$. We concude similarly if $3,5,7$ 
have a common neighbor in $S$. If $1,3$ are not clean, there is a common neighbor $x$ 
of $1,3$. If $5,7$ are not clean, there is a common neighbor $y$ of $5,7$. 
Since $x$ does not see $5$ and $y$ does not see $3$, we have our clean vertices $3,5$.
\end{proof}

Now let us consider our clean vertices $i,i+2$ and their 
respective private neighbors $x$ and $y$ in $S$. 
If one of $x$ and $y$ has a neighbor in $X$, we consider
a connected component $C$ of $X\setminus (N(x)\cup N(y))$ with large
chromatic number together with a vertex 
$z$ of $X\cap (N(x)\cup N(y))$ having at least one neighbor in $C$. Without loss 
of generality, $z$ is joined to $x$. Observe that $z$ cannot be adjacent to $y$, 
for $xzy(i+2)(i+1)i$ would be a $C_6$. 
Our first block consists of $P_x\cup P_y\cup \{z,i,i+1,i+2\}$. Note 
that the two paths $P_xz$ and $P_y(i+2)(i+1)ixz$ have 
different trinities, and that $P_y(i+2)(i+1)ixz$ is indeed induced to avoid a $C_6$.

We now assume that $x$ and $y$ have no neighbors in $X$. We denote by 
$z$ the neighbor of $x$ or $y$ with maximum index in 12345678.
We moreover denote by $P_x$ and $P_y$
two shortest paths from $r$ to $x$ and from $r$ to $y$.

\begin{enumerate}
 
\item If $z=i+2$, our first block consists of $P_x\cup P_y\cup \{i,i+1,i+2\}$. Note 
that the two paths $P_xi(i+1)(i+2)$ and $P_y(i+2)$ have different trinities.

\item If $z=i+3$, then we have the edge $x(i+3)$, and 
our first block consists of $P_x\cup P_y\cup \{i+2,i+3\}$ with
the two paths $P_x(i+3)$ and $P_y(i+2)(i+3)$ having
different trinities.

\item If $z>i+3$, without loss of generality $xw$ is an edge,
and our first block consists of $P_x\cup P_y\cup \{i,i+1,i+2,w\}$
with the two paths $P_xz$ and $P_y(i+2)(i+1)ixz$ having
different trinities.
\end{enumerate}

We can now build the next blocks starting from $z$. 
\end{proof}

Our way of using TCP's will involve inserting inside the blocks
of our TCP some fixed subgraphs like for instance $C_5$. To do so, 
we need a slightly stronger version of the previous theorem, 
which is easily obtained through a similar proof.

\begin{theorem} \label{incrust}
Let $H$ be some fixed graph and $G$ be some connected trinity graph with large chromatic number.
Then one of the following holds:
\begin{enumerate}
\item Every vertex $r$ of $G$ is the origin 
of a large order TCP in which every block induces a copy of $H$.
\item There exists an $H$-free induced subgraph of $G$ with 
large chromatic number. 
\end{enumerate}

\end{theorem}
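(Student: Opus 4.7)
I propose to prove the contrapositive: assume the second alternative fails, i.e.\ every induced subgraph of $G$ with large chromatic number contains an induced copy of $H$, and derive the first alternative.

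The construction closely follows the proof of Theorem~\ref{th:TCP} block by block, with a single modification: each block is enlarged to contain an induced copy of $H$ extracted from the current storage. Concretely, at every stage where the proof of Theorem~\ref{th:TCP} works inside a storage $X$ of large chromatic number (for the first block, $X$ is a large-chromatic component of some level $N_\ell$), we first use the failing assumption to locate an induced copy $H_0 \subseteq X$ of $H$, and then let $X_1$ be a large-chromatic connected component of $X \setminus (V(H_0) \cup N(V(H_0)))$. Because $G$ is triangle-free and $|V(H)|$ is fixed, $V(H_0) \cup N(V(H_0))$ has bounded chromatic number (each $N(h)$ for $h\in V(H_0)$ is a stable set), so $X_1$ is still of large chromatic number. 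We then run the remainder of the argument of Theorem~\ref{th:TCP} inside $X_1$ in place of $X$, obtaining a block $B$ together with a residual storage for the next block. The TCP block we output is $B \cup V(H_0)$: it contains $H_0$ as an induced subgraph of $G$, hence \emph{induces a copy of $H$} in the sense of the statement; moreover the two induced $x_i y_i$-paths of different trinities living in $B$ remain induced after the enlargement, because chords of a path in a graph depend only on edges between path vertices, and $V(H_0)$ is disjoint from those. The process iterates: the new residual storage is a subgraph of $X_1$ and is therefore disjoint from and non-adjacent to $V(H_0)$, so it is still of large chromatic number and the argument can be applied to it from scratch to build the next block.

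The main obstacle is the non-adjacency condition between distinct blocks required by the definition of a TCP: we must ensure that no edge of $G$ connects $V(H_0)$ to any vertex belonging to a later block. This is exactly why we delete $N(V(H_0))$, not just $V(H_0)$, when passing to $X_1$; subsequent blocks live entirely inside the region built from $X_1$, which is disjoint from $N(V(H_0))$, so no cross-edge can arise. Beyond this careful choice of $X_1$, the proof is a routine transcription of the proof of Theorem~\ref{th:TCP}, with each ``storage of large chromatic number'' replaced by its peeling after removing an extracted copy of $H$.
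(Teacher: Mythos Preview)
Your proof is correct and follows essentially the same approach as the paper: peel off a copy of $H$ from the current large-chromatic storage, pass to a component avoiding its closed neighbourhood, and otherwise run the argument of Theorem~\ref{th:TCP} verbatim. The one difference worth noting is that the paper also attaches the copy of $H$ to the current root via a shortest path $Q$ (so it adds $Z=Q\cup H$ to the block and deletes $N(Z)$ rather than just $N(H)$); this keeps each block connected, which is not required by the formal TCP definition you are asked to verify, but is what makes the ``key property'' at the end of Section~2 go through when an external vertex happens to see only the $H$-part of a block.
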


\begin{proof}
The proof is exactly the same as that of the previous theorem,
we just have to append to every block a copy of $H$.

We just show how to grow the first block.
Let $r$ be a vertex of the graph. Let $\ell$ be such that $G[N_\ell(r)]$
has large chromatic number. Let $G'$ be some connected component of
$G[N_\ell]$ with large chromatic number. 
If $G'$ is $H$-free, then we are done. If not, we find a copy of
$H$ (denoted by $H$). Consider a shortest path $Q$ from 
$r$ to $H$, and denote by $Z$ the union of $Q$ and $H$.
This is the subgraph that we will add to our first 
block, once we will have constructed it. Observe that
$N(Z)\cap G'$ has bounded chromatic number (namely at most 
$|H|+1$, since the neighborhood of any set $S$ in $G$ has chromatic number at most $|S|$ and
there is only one vertex in $Q \cap N_{\ell-1}$), 
and thus we can consider a connected component $G''$ of
$G'\setminus N(Z)$ with large chromatic number.

From this point, we grow our first block in the same way as 
in the previous theorem. The only difference is that we
add $Z$ to it.                                                      
\end{proof}

Let us conclude this section on TCP by observing the key-property
of these structures: if some vertex $x$ sees some blocks 
$B_i$ and $B_k$ of a TCP $T$, then $x$ sees every pair of blocks 
$B_j\cup B_{j+1}$ where $i<j<j+1<k$. Indeed, we could otherwise form 
an induced cycle using $x$ and the TCP which would traverse $B_j$ and $B_{j+1}$,
leaving then all possible choices of trinities including trinity 0.

\section{Rich sets and TCPs}

We now use TCPs to find many disjoint independent induced
paths between vertices. Let $X$ and $Y$ be disjoint sets in a graph $G$. 
We say that $X$ is {\it rich in $Y$} if we can find in $G[Y]$ a large number of 
pairwise independent subgraphs $Y_1, Y_2, \ldots, Y_p$ (called also {\it blocks}) such that every vertex 
in $X$ has at least one neighbor in each $Y_i$. 
Observe that if $u,v$ are vertices of $X$, then there exist many (in the sense 
of 'a large number of')
independent induced $uv$-paths. In particular, if $uv$
is not an edge, at most one of these paths has trinity 0, and 
the others have either all trinity $1$, or all trinity $2$. We then 
say that $uv$ has {\it type} $1$ or $2$, respectively. If $uv$
is an edge, we define the {\it type} of $uv$ to be 1.

Rich sets are very useful to find trinity 0 cycles.
It could be tempting to conclude that if we have a stable set of six rich vertices in $Y$,
then since every pair of vertices has type 1 or 2, one can find by Ramsey a triangle
$uvw$ of the same type, and hence a trinity 0 cycle. This is however not the case,
since a triangle $uvw$ of type 2 could be obtained for instance if $u,v,w$ always have 
the same neighbors in $Y$, or a triangle $uvw$ of type 1 is such that $u$ 
has a neighbor on every $vw$-path with internal vertices in $Y$. 

Let us formalize this notion. Let $u,v, w$ be vertices in the rich set $X$. For any $i$, we say that
$u$ \textit{cuts $vw$ in $Y_i$} if every induced $vw$-path in $\{v,w\}\cup Y_i$ 
contains a neighbor of $u$. More generally, we say that $u$ \textit{cuts $vw$} 
if $u$ cuts $vw$ in all but a bounded number of $Y_i$'s.

Let $S$ be a subset of $X$. We say that some $Y_i$ is a \textit{typical block} 
(with respect to $S$) if for every triple $u,v,w \in S$, if $u$ cuts $vw$ then $u$ cuts 
$vw$ in $Y_i$. Observe that such a block exists for every bounded subset $S$.

Let us see now how we can construct rich sets using TCPs. The following 
lemma is the key of the next section where it will be repeatedly used
to form our shadows and antishadows.

\begin{lemma}
Let $G$ be a connected trinity graph with large chromatic number
and $r$ be a vertex of $G$. If $N_{\ell}(r)$ has large chromatic 
number, then there exists a set $X$ in $N_{\ell -1}(r)$ which
is rich in $N_{\ell}(r)$ and such that $X$ dominates a TCP in $N_{\ell}(r)$.
\end{lemma}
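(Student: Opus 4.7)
The plan is to apply Theorem~\ref{th:TCP} inside $N_\ell(r)$ to produce a long TCP at level $\ell$, and then to use the key TCP property recalled at the end of Section~2 together with an iterative pigeonhole argument to extract $X$.

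First, pick a connected component $G'$ of $G[N_\ell(r)]$ of large chromatic number and apply Theorem~\ref{th:TCP} inside $G'$ to obtain a TCP $T=(B_1,\dots,B_N)$ of arbitrarily large order $N$, entirely contained in $N_\ell(r)$. Every vertex of $T$ is at distance exactly $\ell$ from $r$, hence has a neighbor in $N_{\ell-1}(r)$. Let $D\subseteq N_{\ell-1}(r)$ be the set of vertices with at least one neighbor in $T$, so $D$ dominates $T$; for each $d\in D$ write $I(d)=\{i:d\text{ has a neighbor in }B_i\}$.

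The key TCP property pins down the structure of each $I(d)$: if $i,k\in I(d)$ with $i<k$, then no two consecutive indices of $(i,k)$ are both absent from $I(d)$. Indeed, picking neighbors of $d$ in $B_i$ and $B_k$ and joining them through $T$ while avoiding $N(d)$ in two consecutive blocks would produce an induced cycle through $d$; exploiting the two-trinity freedom within each traversed block, this cycle can be tuned to have trinity $0$, contradicting that $G$ is a trinity graph. Hence each $I(d)$ is ``dense'' on its own span.

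I then build $X$ and the witnesses of richness by iterative pigeonhole. Set $D_0:=D$ and $j_0:=0$. At step $s\ge 0$, pick an index $j_{s+1}\ge j_s+2$ such that a positive constant fraction of $D_s$ has $j_{s+1}\in I(\cdot)$ (such $j_{s+1}$ exists by averaging over the blocks in $[j_s+2,N]$, the density estimate above guaranteeing enough coverage), and define $D_{s+1}:=\{d\in D_s:j_{s+1}\in I(d)\}$. After $p$ steps, set $X:=D_p$ and $Y_s:=B_{j_s}$ for $s=1,\dots,p$. Choosing $N$ much larger than $p$ and than the hidden Ramsey-type constants, the set $X$ is still large. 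The blocks $Y_s$ are pairwise independent because non-consecutive TCP blocks are independent by definition of a TCP and the spacing is $\ge 2$, and every $x\in X$ has a neighbor in every $Y_s$ by construction, giving richness. For the TCP dominated by $X$, note that every $x\in X$ sees both $B_{j_1}$ and $B_{j_p}$, so by the key TCP property each $x$ hits at least every other block of the sub-TCP $T'$ of $T$ on the interval $[j_1,j_p]$; collectively (after a slight trimming of the endpoints if needed), $X$ dominates $T'$, which is the required TCP in $N_\ell(r)$.

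The main obstacle is the quantitative calibration: each refinement step loses a constant factor of $D_s$, so one has to start from $N$ much larger than $p$ and verify at the end that $X$ is still large enough to dominate a nontrivial sub-TCP. This is precisely the setting of the ``large/bounded'' hierarchy set up in the introduction, so the argument goes through without explicit constants.
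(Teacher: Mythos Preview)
Your argument has a genuine gap at the pigeonhole step. You correctly observe that each $I(d)$ is dense on its own span $[\min I(d),\max I(d)]$, but nothing prevents a vertex $d\in D$ from having span of length zero, i.e.\ seeing a single block; the key TCP property from Section~2 only kicks in once a vertex is \emph{already known} to see two far-apart blocks, so it cannot bootstrap itself. If, say, the $N_{\ell-1}$-neighbourhoods of the distinct blocks happen to be pairwise disjoint, then each block is hit by only a $1/N$ fraction of $D$ and already the first refinement step fails. Your ``averaging over $[j_s+2,N]$'' needs a lower bound on the typical $|I(d)|$, and you have established none. There is also a second, independent error in your last paragraph: that every $x\in X$ sees each consecutive pair $B_j\cup B_{j+1}$ in $[j_1,j_p]$ does \emph{not} mean $X$ dominates the sub-TCP $T'$; domination requires every \emph{vertex} of $T'$ to have a neighbour in $X$, not merely that every block is touched.

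The paper fixes both problems with a single device, an \emph{anchor}. Before growing the TCP it fixes $u\in N_{\ell-1}$ together with a neighbour $z\in N_\ell\cap N(u)$ seeing a large-chromatic component $C'\subseteq N_\ell\setminus N(u)$, and then grows $T$ from $z$ inside $C'$. Now for any $x\in N_{\ell-1}$ with a neighbour $y$ in some block $B_i$, one closes a cycle via the up-path $U_{xu}$ (internal vertices in levels $<\ell-1$, hence independent of $T$), the edge $uz$, and an induced $zy$-path along $T$; this forces $x$ to see every pair $B_j\cup B_{j+1}$ with $1<j<i$. In effect the anchor acts as a virtual ``block $B_0$'' that every relevant $x$ automatically reaches, forcing its span to start at the origin. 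Taking $X$ to be \emph{all} of $N_{\ell-1}$ that sees the second half of $T$ then yields richness in the first half directly, with no refinement, and domination of the second-half sub-TCP is immediate since every vertex there lies in $N_\ell$ and hence has some neighbour in $N_{\ell-1}$, which by definition belongs to $X$.
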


\begin{proof}
Start with any vertex $u$ in $N_{\ell-1}$ which sees a connected component 
$C$ of $N_{\ell}$ with large chromatic number. Consider a 
connected component $C'$ of $C\setminus N(u)$ with large chromatic number.
Let $z$ be a vertex of $C\cap N(u)$ which sees $C'$.
Now start growing a TCP $T$ from $z$ in $C'$. 
Observe that if some vertex $x$ of $N_{\ell-1}$ has a neighbor $y$ in some block $B_i$ of 
$T$, then we can close a cycle using an $U_{xu}$-path
and some induced $zy$-path $P$ on the TCP. Since $P$ 
has two trinity choices when traversing each block, the only way to 
avoid a trinity 0 cycle is that $x$ itself sees many blocks. Precisely,
$x$ must see every pair of blocks $B_j\cup B_{j+1}$ where $1<j<i$.
In particular, if $T$ has $2k$ blocks for some large value $k$,
the set $X$ of vertices in $N_{\ell-1}$ which 
see some $B_i$ with $k\leq i\leq 2k$ is rich in $T$, hence in $N_{\ell}(r)$.
Moreover, $X$ dominates the subTCP of $T$ consisting of all blocks 
with index at least $k$.
\end{proof}

Let us now turn to two useful facts that we will widely use in our proofs.
The first one directly follows from the fact that two rich vertices have type 1 or 2.

\begin{fact}\label{fact:richtype}
 If two vertices $x, y \in N_\ell$ are rich in a set $B$, then there 
cannot be both an induced $xy$-path of trinity $1$ and one of trinity $2$ with all 
their internal vertices independent of $B$.
\end{fact}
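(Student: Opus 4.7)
My plan is a short proof by contradiction that exploits the type dichotomy for pairs in a rich set together with the arithmetic $1+2\equiv 0\pmod 3$. I suppose that two induced $xy$-paths $P$ and $Q$ exist, of trinity $1$ and $2$ respectively, with all internal vertices independent of $B$. Richness of $\{x,y\}$ in $B$ provides pairwise independent blocks $Y_1,\dots,Y_p \subseteq B$ (with $p$ large), each containing a neighbor of $x$ and a neighbor of $y$, and, as the observation following the definition of richness notes, yields a correspondingly large family of pairwise independent induced $xy$-paths $R_1,\dots,R_p$ whose internal vertices of $R_i$ lie inside $Y_i$.

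The pair $xy$ has a well-defined type $\tau \in \{1,2\}$, meaning that all but at most one of the $R_i$'s have trinity $\tau$. I would then pick any $R_i$ realising $\tau$, and pair it either with $Q$ (if $\tau = 1$) or with $P$ (if $\tau = 2$); the resulting cycle has trinity $\tau + (3-\tau) \equiv 0 \pmod 3$, and would therefore be an induced $3k$-cycle, contradicting that $G$ is a trinity graph.

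The step I expect to require the most care, though still routine, is checking that this cycle is actually chordless. The internal vertices of $P$ (or $Q$) are independent of $B \supseteq Y_i$, so no chord runs between the two halves; chords inside one half are ruled out because $P$, $Q$, and $R_i$ are themselves induced $xy$-paths; and in the degenerate situation where $xy$ is an edge, the fact that $G$ is triangle-free (as a trinity graph) forces $Q$ (resp.\ $P$) to have length at least $5$, so the cycle does not collapse. I expect no deeper obstacle here, which is consistent with the paper's remark that this fact follows directly from the type dichotomy.
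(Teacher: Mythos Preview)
Your proof is correct and is exactly the argument the paper has in mind when it says this fact ``directly follows from the fact that two rich vertices have type 1 or 2'': glue an $R_i$ of trinity $\tau$ to the external path of trinity $3-\tau$, and the independence hypothesis makes the resulting trinity-$0$ cycle chordless. (The edge case is in fact vacuous --- if $xy$ is an edge then every induced $xy$-path is the edge itself, so no induced $xy$-path of trinity $2$ can exist --- so your length-$5$ discussion is unnecessary, but this does not affect correctness.)
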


The second fact is crucial. We will often prove that some 
vertex cuts some pair. Here we show that there cannot be too
many cuts.

\begin{fact}\label{fact:asymmetricut}
Let $x,y,z$ be a stable set which is rich in $B$. If $x$ cuts $yz$ 
and $y$ cuts $xz$ then $x$ and $y$ have common neighbors in a large 
number of blocks of $B$. In particular $xy$ has type 2.
\end{fact}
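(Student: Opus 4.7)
The plan is to show that in every typical block $B$ (with respect to $S=\{x,y,z\}$), the vertices $x$ and $y$ already share a neighbor inside $B$. Since only a bounded number of blocks fail to be typical, this immediately gives common $xy$-neighbors in a large number of blocks. The ``$xy$ has type $2$'' addendum is then automatic: any common neighbor $v$ yields the induced $xy$-path $xvy$ of length $2$, hence trinity $2$, and by the rich-pair property (Fact~\ref{fact:richtype}) this is the type of $xy$.

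Fix a typical block $B$; by definition of typical, both $x$ cuts $yz$ in $B$ and $y$ cuts $xz$ in $B$. Suppose, for contradiction, that $N(x)\cap N(y)\cap B=\emptyset$. I first observe that the other two pairs also have empty common neighborhoods inside $B$. Indeed, a common neighbor $v$ of $y,z$ in $B$ would make $yvz$ an induced $yz$-path, and the cutting of $yz$ by $x$ would force some vertex of that path to lie in $N(x)$; since $\{x,y,z\}$ is stable, this vertex can only be $v$, contradicting the assumption $N(x)\cap N(y)\cap B=\emptyset$. Symmetrically, $N(x)\cap N(z)\cap B=\emptyset$.

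Next, I take an induced $xz$-path $Q=x,q_1,\ldots,q_k,z$ with internal vertices in $B$. The absence of a common $xz$-neighbor in $B$ gives $k\geq 2$. The hypothesis that $y$ cuts $xz$ in $B$ forces a $y$-neighbor somewhere on $Q$, but $q_1\in N(x)$ and $q_k\in N(z)$ preclude $q_1,q_k\in N(y)$, so the $y$-neighbors on $Q$ live in $\{q_2,\ldots,q_{k-1}\}$; in particular $k\geq 3$. Let $j_t$ be the largest index with $q_{j_t}\in N(y)$; then $2\leq j_t\leq k-1$.

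The key step is to examine the path $R'=y,q_{j_t},q_{j_t+1},\ldots,q_k,z$, whose internal vertices lie in $B$. I claim $R'$ is induced: the middle segment $q_{j_t},\ldots,q_k$ is induced as a subpath of $Q$; the vertex $y$ has no neighbor in $\{q_{j_t+1},\ldots,q_k\}$ by maximality of $j_t$; and $z$ has no neighbor in $\{q_{j_t},\ldots,q_{k-1}\}$ because $Q$ is induced, so $zq_k$ is the only edge from $z$ into $V(Q)$. Now applying the cutting of $yz$ by $x$ to $R'$ forces $x$ to be adjacent to some vertex of $R'$; since $\{x,y,z\}$ is stable, that vertex must be $q_p$ for some $p\geq j_t\geq 2$, but on the induced path $Q$ the unique neighbor of $x$ is $q_1$, a contradiction. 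I expect the main conceptual step to be spotting the auxiliary path $R'$, which reroutes the tail of $Q$ through $y$; everything else is a direct unwinding of the definitions of induced path, stable set, and cutting.
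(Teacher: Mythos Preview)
Your proof is correct and is essentially the paper's argument run with the roles of the two cuts swapped: the paper takes a shortest $yz$-path $P$, finds an $x$-neighbor $v$ on it, and observes that unless $v$ coincides with $y$'s (unique) neighbor on $P$ one can reroute through $v$ to an induced $xz$-path missed by $y$; you start from an $xz$-path and reroute to a $yz$-path instead, after a short (and correct) preliminary reduction showing the other two pairs also lack common neighbors in the block. One minor quibble: your appeal to Fact~\ref{fact:richtype} for the ``type $2$'' conclusion is misplaced, since that fact concerns paths whose interior is \emph{independent} of $B$, whereas here the conclusion follows directly from the definition of type, as the common neighbors furnish a large family of pairwise independent trinity-$2$ $xy$-paths through the blocks.
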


\begin{proof}
By definition, $x$ cuts $yz$ in all but a bounded number of blocks, 
and similarly with $y$ and $xz$. Consequently $x$ cuts $yz$ 
and $y$ cuts $xz$ in all but a bounded 
number of blocks. Consider a typical
block $B_i$ in which $x$ cuts $yz$ and $y$ 
cuts $xz$. Consider a shortest $yz$-path $P$ with internal vertices in 
$B_i$. Note that $P$ has an internal vertex $v$ joined to $x$. If $v$
is not the neighbor of $y$ in $P$, there is an $xz$-path 
which is not seen by $y$, a contradiction. Consequently, for every such block $B_i$, $x$ 
and $y$ have a common neighbor. Therefore, they have a large number of common neighbors.
\end{proof}

\section{Shadows and antishadows}

Let $G$ be a connected trinity graph with large chromatic number and $r$ be a vertex of $G$.
A {\it shadow} is a set of vertices $X$ in some level $N_{\ell +1}(r)$ such that $G[X]$ 
has large chromatic number, and $X$ is dominated by a stable set $S$ included in 
$N_{\ell}$.

Let $X$ be a set of vertices in $G$ and $\ell$
be the largest value for which $X\cap N_{\ell}$ is non empty. 
We say that $X$ is {\it freely closable} if for every 
$u\in X\cap N_{\ell}$, there is an induced $ru$-path whose internal 
vertices are independent of $X\cap N_i$ for all $i<\ell$.
In other words, we can connect $r$ with $X\cap N_{\ell}$ without 
touching the vertices of $X$ living in the upper levels.

An {\it antishadow} is a collection of subsets $X_0,X_1,X_2,X_3,X_4$
included respectively in $N_{\ell}$, $N_{\ell+1}$, $N_{\ell+2}$, $N_{\ell+3}$, $N_{\ell+4}$
with the following properties:

\begin{itemize}
 \item $X_0$ induces a large chromatic number subgraph,
 \item $X_1$ induces a stable set,
 \item $X_{i+1}$ dominates $X_i$ for every $i=0,1,2,3$,
 \item every $X_i$ is rich in $X_{i+1}$, for $i=1,2,3$,
 \item $X_0\cup X_1\cup X_2\cup X_3\cup X_4$ is freely closable.
\end{itemize}

The terminology antishadow refers to the fact that, in a shadow, the 
set of large chromatic number is dominated by a stable set that lives 
in the upper level, while here 
$X_0$ is dominated by a stable set living in the lower level.

\begin{theorem}\label{shadow}
If $G$ is a connected trinity graph with large chromatic number and 
$r$ is a vertex of $G$, there exists a shadow or an antishadow.
\end{theorem}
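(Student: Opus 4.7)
The plan is to iterate the key lemma of Section~3 starting from a level of large chromatic number, building a chain of rich sets each one level closer to $r$ than the previous, and to branch at every stage into either a shadow extraction or a continuation. After four successful iterations we will read off an antishadow from the chain by relabeling indices.

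First I would fix a level $N_\ell(r)$ of large chromatic number and let $X^*_0\subseteq N_\ell$ be a large-chromatic connected subgraph. Applying the lemma of Section~3 gives a set $X^*_1\subseteq N_{\ell-1}$ that is rich in $X^*_0$ and dominates a TCP in $X^*_0$. The key pigeonhole is then as follows: since $G$ is triangle-free, every $x\in X^*_1$ has a stable neighborhood in $X^*_0$, and since $X^*_1$ dominates $X^*_0$, assigning to each $v\in X^*_0$ the color of some $x\in X^*_1$ adjacent to it produces a proper coloring of $X^*_0$ using at most $\chi(X^*_1)$ classes. Consequently some color class $S\subseteq X^*_1$ is a stable set in $N_{\ell-1}$ dominating a subset of $X^*_0$ of chromatic number at least $\chi(X^*_0)/\chi(X^*_1)$. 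If $\chi(X^*_1)$ is bounded this gives a shadow at level $\ell$; otherwise $\chi(X^*_1)$ is large and we iterate, applying the lemma again at the level of $X^*_1$.

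Iterating at most four times we either extract a shadow at an intermediate step or produce sets $X^*_0,X^*_1,X^*_2,X^*_3,X^*_4$ in levels $N_\ell,N_{\ell-1},\ldots,N_{\ell-4}$, all of large chromatic number, with $X^*_{i+1}$ rich in $X^*_i$ for each $i$. To read off the antishadow we relabel $X_j:=X^*_{4-j}$, so that $X_0$ lies in the upper level $N_{\ell-4}$; the large chromatic number of $X_0$ and the richness/domination axioms on $(X_i,X_{i+1})$ for $i\ge 1$ then follow from the iteration. To arrange the stability of $X_1=X^*_3$, one last instance of the color-class pigeonhole on $X^*_3$ (viewed as a dominator of $X^*_4=X_0$) yields a stable subset of $X^*_3$ still dominating a large-chromatic restriction of $X^*_4$, which we redesignate as the refined $X_0$. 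Finally, free closability is arranged by choosing all closing $ru$-paths along the BFS tree, which traverses strictly upper levels $<\ell-4$ and so cannot hit the antishadow.

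The main obstacle I foresee is that the stability refinement of $X_1$ must be compatible with the richness of $X_1$ in $X_2=X^*_2$: passing from $X^*_3$ to a stable color class could in principle destroy the richness condition with $X^*_2$. Resolving this requires invoking Facts~\ref{fact:richtype} and~\ref{fact:asymmetricut} to show that the offending configurations would force an induced trinity-0 cycle through the rich cascade and the TCP carried along by the key lemma, contradicting the trinity-graph hypothesis. The bookkeeping of \emph{large} versus \emph{bounded} parameters through four iterations, and the verification of free closability using the triangle-free structure, are routine but must be handled with care.
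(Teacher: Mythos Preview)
Your outline has the right shape, but there are genuine gaps at three places.

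First, the dichotomy ``if $\chi(X^*_1)$ is bounded then there is a shadow, else iterate'' does not close. The lemma of Section~3 produces a rich set $X^*_1\subseteq N_{\ell-1}$ that dominates only a TCP inside $X^*_0$, not the large-chromatic set $X^*_0$ itself; so from a bounded $\chi(X^*_1)$ you cannot read off a shadow (your colouring pigeonhole silently assumes that $X^*_1$ dominates $X^*_0$, and it does not). Conversely, nothing in the lemma forces $\chi(X^*_1)$ to be large, so the ``else'' branch may never be reached. The paper handles this in the other direction: it \emph{assumes} there is no shadow, deduces that every bounded-chromatic $S\subseteq N_{k-1}$ has $N(S)\cap N_k$ of bounded chromatic number, and then uses a TCP-closing argument (routing a path through upper levels to a vertex of $C''_k\setminus N(X')$) to prove that the rich set $X'=N(T')\cap N_{k-1}$ must itself have large chromatic number. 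This is the missing engine in your iteration.

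Second, your free-closability argument is incorrect: a BFS path from $r$ to a vertex $u\in X_4$ necessarily traverses the levels containing $X_0,\dots,X_3$ and may well meet those sets. The paper fixes one shortest $rv$-path $P$ at the very start and removes $N(P)$ from each $X_i$ as it is constructed; free closability then comes for free along $P$.

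Third, your stability refinement of $X_1$ via a colour-class pigeonhole on $X^*_3$ cannot work as stated, because you have just arranged $\chi(X^*_3)$ to be \emph{large}, so the quotient $\chi(X^*_4)/\chi(X^*_3)$ need not be large. The paper instead redefines $X_1$ to be a stable colour class of the second-half TCP $T'$ lying inside the old $X_1$, exploiting that a TCP has \emph{bounded} chromatic number; this is exactly what keeps $X_0$ large while making $X_1$ stable and still rich in $X_2$. Invoking Facts~\ref{fact:richtype} and~\ref{fact:asymmetricut} does not repair this: those facts constrain types and cuts among rich vertices, but say nothing about preserving richness of $X_1$ in $X_2$ after passing to an arbitrary stable subset.
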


\begin{proof}
We assume that there is no shadow, and construct an antishadow.
Observe that for every set $S\subseteq N_{k-1}(r)$ with bounded chromatic 
number, the subgraph induced by $N(S)\cap N_k(r)$ does not have large chromatic 
number. Indeed $S$ would have a stable set $S'$ such that $N(S')\cap N_k(r)$ 
has large chromatic number, hence $S'$ would be a shadow.

Let us start for this in some level $N_k$ with 
large chromatic number, and consider a component $C_k$
of $N_k$ with large chromatic number. Pick some vertex $v$ in $C_k$,
and fix some shortest $rv$-path $P$. This is this particular $P$ which will 
certify that our antishadow will be freely closable. We now build 
our first subset $X_4$ inside $N_{k-1}$. 

We perform for this a TCP extraction, which we will also 
perform in the upper levels. We just describe here the one we do 
in $C_k$.
We start at $v$, and denote by $R$ the set of vertices of 
$C_k$ which are neighbors of $v$, or have a common neighbor with $v$ in $N_{k-1}$.
In other words the vertices of $R$ which are not neighbors 
of $v$ are neighbors of the set $N(v)\cap N_{k-1}$.
Since $N(v)\cap N_{k-1}$ is a stable set and $G$ has no shadow,
the chromatic number of $R$ is not large. Hence $C'_k:=C_k\setminus R$
has large chromatic number. We now consider a shortest (induced) path $Q$ in $C_k$
from $v$ to $C'_k$. Let us say that $Q$ ends at $v'\in C'_k$. Starting at 
$v'$, we can now grow a TCP inside $C'_k$, but we will only grow a partial one. In other words, 
we grow a large TCP $T$ inside $C'_k$ starting from $v'$
and ending in $w'$ in such a way that the TCP can still be extended
from $w'$. More precisely, $w'$ belongs to a connected 
graph $C''_k$ of large chromatic number, included in $C'_k$, such that there is 
no edge between $T\setminus w'$ and $C''_k$. Now we consider the second half $T'$ of $T$, i.e. the blocks
of $T$ with index at least half the order of $T$. Set $X':=N(T')\cap N_{k-1}$.
Assume for contradiction that $X'$ does not have large chromatic number. Since $G$ has no shadow,
$N(X')\cap N_{k}$ is not large, in particular $C''_k\setminus N(X')$
has large chromatic number, and thus is non empty. Now we consider a vertex 
$w''$ in $C''_k\setminus N(X')$. To get a contradiction, consider a neighbor 
$a$ of $v$ in $N_{k-1}$, a neighbor $b$ of $w''$ in $N_{k-1}$, a path $Q'$
from $w'$ to $w''$ in $C''_k\setminus N(X')$, and finally a path $U_{ab}$. The union of $Q\cup Q'\cup U_{ab}$
is a path closing the TCP $T$. However since neither $a$ nor $b$ has 
neighbors in $T'$, we can construct a trinity 0 cycle.

Therefore $X'$ has large chromatic number, and every vertex of 
$X'$ sees all the pairs of consecutive  blocks
of $T\setminus T'$. So we have constructed a subset of $N_{k-1}$ rich in $T\setminus T'$, with large 
chromatic number. To avoid interference with $P$, we set our first set $X_4$
to be $X'\setminus N(P)$. Observe that $N(P)$ has chromatic number at most 6, hence 
$X_4$ still has large chromatic number. 

Thus, we can also perform a TCP-extraction inside $X_4$, to form $X_3$, $X_2$,
and $X_1$.

We now perform our usual TCP-extraction from $X_1$, and define
our $X'$ to be the set of neighbors of the half TCP $T'$ of $X_1$.
The chromatic number of $X'$ is still large.
The trick is now to observe that the chromatic number of 
$T'$ itself is bounded. So $T'$ can be partitioned into a bounded 
number of stable sets. One of these stable sets $S$ dominates a subset
$X''$ of $X'$ with large chromatic number. We now set $X_1$ to be equal 
to $S$, and $X_0:=X''\setminus N(P)$ to avoid interference with $P$.
\end{proof}

\section{Excluding $C_5$.}

We now turn to the proof of our main result. The technique we will 
use is now ready. We need however a last ingredient which is very
classical when considering graph classes: we have to find a 
special graph $H$ which splits the difficulty of the problem. 
More precisely, we want $H$ such that both excluding $H$ and 
knowing that $H$ is a subgraph lower the difficulty of the problem.
In our case, we do not have to look very far, we just have to 
consider $C_5$. 

We first conclude when $C_5$ is an induced subgraph.

\begin{lemma}\label{lem:dominatingC5}
Let $C$ be a 5-cycle in $N_\ell$ with a minimal dominating stable set $S$ 
either in $N_{\ell-1}$ or in $N_{\ell+1}$.
If $S$ is rich in a TCP $T$ in its lower level 
and $T$ is independent of $C$, then $|S|=3$
and every vertex of $S$ has exactly two neighbors in $C$.
\end{lemma}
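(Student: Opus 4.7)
The plan is to do case analysis on the profile $(|N(s)\cap C|)_{s\in S}$. Since $G$ is triangle-free, $N(s)\cap C$ is a stable set in $C_5$, of size at most $2$, with its two elements (if any) at distance $2$ in $C$. Combined with minimality (each $s\in S$ owns a private neighbor in $C$) and domination (the sizes sum to at least $5$), this restricts $|S|\le 5$ and leaves only the profiles $(1^5),(2,1^3),(2^2,1^2),(2^2,1)$ and $(2^3)$ as a priori possible; the others are ruled out immediately by a pigeonhole on the five non-adjacent pairs of $C_5$ (either domination or the private-neighbor requirement fails).

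I would then dispatch every profile except $(2^3)$ by a uniform cycle argument. For any pair $s,s'\in S$ and any induced arc of $C$ from a $C$-neighbor of $s$ to one of $s'$, closing this arc with an induced $T$-path produces a cycle whose trinity is determined modulo $3$ by the arc length and the type $t_{ss'}\in\{1,2\}$ (by richness of $S$ in $T$); the trinity-graph hypothesis forbids this trinity from being $0$. In the profiles $(1^5)$ and $(2,1^3)$ some pair of $S$-vertices has single $C$-neighbors at distance exactly $2$ in $C_5$, so the two induced arcs between them have lengths $2$ and $3$; the two resulting cycles have trinities $(1+t_{ss'})$ and $(2+t_{ss'})$ modulo $3$, and no value of $t_{ss'}\in\{1,2\}$ avoids $0$ in both, a contradiction. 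The profile $(2^2,1^2)$ reduces to the same contradiction via its two singletons, which the private-neighbor constraint forces to lie at distance exactly $2$ in $C_5$ (any other placement violates minimality).

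The only remaining case is $|S|=3$ with profile $(2^2,1)$, say $s_1\!:c_1c_3,\ s_2\!:c_2c_4,\ s_3\!:c_5$. A direct enumeration of induced arcs shows that the induced $s_is_j$-paths in $C$ all have lengths in $\{3,4\}$, hence trinities in $\{0,1\}$, so the same cycle-avoidance argument forces $t_{12}=t_{13}=t_{23}=1$. By Fact~\ref{fact:asymmetricut}, since every pair of $S$-vertices has type $1$, the three pairwise contrapositives together imply that at most one of the three statements ``$s_i$ cuts $s_js_k$'' can hold. If none holds, I select for each pair an induced $T$-path $Q_{ij}$ through a block of the TCP disjoint from the other two and avoiding the neighborhood of the third vertex; the cycle $s_1Q_{12}s_2Q_{23}s_3Q_{31}s_1$ is then induced and has length $\equiv 1+1+1\equiv 0\pmod 3$, a contradiction.

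The main obstacle is the asymmetric sub-case where exactly one vertex, say $s_3$, cuts $s_1s_2$. To handle it I take an induced $s_1s_2$-path $Q_{12}^T$ through a typical block, let $v$ be the first $s_3$-neighbor along $Q_{12}^T$ starting from $s_1$, and use the edge $vs_3$ to form two induced sub-cycles: one through an induced $s_3s_1$-path $Q_{31}^T$ (chosen to avoid $N(s_2)$ using that $s_2$ does not cut $s_1s_3$), and one through an induced $s_3s_2$-path $Q_{23}^T$ (avoiding $N(s_1)$). A direct trinity computation gives that the two sub-cycle trinities sum to $2\pmod 3$, so one of them is trinity $0$ unless $v$ sits at position $\equiv 2\pmod 3$ from $s_1$ along $Q_{12}^T$ in every typical block. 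Exploiting the large number of TCP-blocks, the two path choices of distinct trinities available inside each block, and the rigidity imposed by the cut hypothesis on where $s_3$'s neighbors must appear on $Q_{12}^T$, one engineers a block where this residue constraint fails, producing a trinity $0$ cycle. Hence only $(2^3)$ survives, as required.
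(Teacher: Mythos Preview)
Your case analysis on the degree profile is sound, and the disposal of the configurations with two single-neighbour vertices at distance~$2$ in $C$ matches the paper's. In the $(2^2,1)$ case you correctly force all three types to equal~$1$ and, via Fact~\ref{fact:asymmetricut}, that at most one of the three statements ``$s_i$ cuts $s_js_k$'' can hold; your three-block cycle then cleanly handles the sub-case where none holds.

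The genuine gap is the sub-case where exactly one vertex, say $s_3$, cuts $s_1s_2$. Two problems arise. First, your second sub-cycle---through the edge $vs_3$, the $v$--$s_2$ portion of $Q_{12}^T$, and $Q_{23}^T$---need not be induced: $v$ is only the \emph{first} neighbour of $s_3$ from the $s_1$ end, and $s_3$ may well have further neighbours between $v$ and $s_2$, producing chords. (Patching this with the \emph{last} $s_3$-neighbour $v'$ destroys the ``sum to $2$'' identity, since $v\ne v'$ in general.) Second, and more seriously, the closing sentence ``one engineers a block where this residue constraint fails'' is not an argument. Nothing in your setup forces the position of the first $s_3$-neighbour on $Q_{12}^T$ to vary modulo~$3$ across blocks, and the ``two trinity choices inside each block'' of a TCP refer to end-to-end traversals of that block, not to paths entering at an arbitrary $s_1$-neighbour and leaving at an arbitrary $s_2$-neighbour; they give you no leverage here.

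The paper bypasses this sub-case with a different idea you do not use: for each pair among $\{x,y,z\}$ it produces an induced path of trinity~$1$ whose internal vertices lie in the level \emph{above} $S$ (the length-$4$ arc through $C$ when $S\subseteq N_{\ell+1}$, or an up-path $U_{xy}$ when $S\subseteq N_{\ell-1}$), hence automatically independent of the TCP, and then argues this path is not seen by the third vertex. Closing the cycle with this upper path $P_{xz}$ needs only \emph{two} TCP-paths, and therefore only two of the three non-cutting statements---precisely what ``at most one cuts'' guarantees after a suitable relabelling. That upper-level closing path is the missing ingredient in your proposal.
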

\begin{proof}
Note that, to avoid triangles, no vertex in $S$ can have three neighbors in $C$ or more. 
It follows that $|S|\geq 3$. 
Let $C=\{1,2,3,4,5\}$. If there is no element of $S$ with exactly one neighbor in $C$, 
then the result holds by minimality of $S$. We consider the case where there is at least one.

First assume that there are two elements $x,y \in S$ and $i \in C$, 
such that $x$ is only adjacent to $i$ in $C$, and $y$ is only adjacent 
to $i+2$. Then there is an induced
path of trinity $1$ in $C\cup \{x,y\}$ between $x$ and $y$, as well as an induced path of 
trinity $2$. This contradicts 
Fact~\ref{fact:richtype}.

Assume now that there are at least two elements in $S$, each with exactly one neighbor in $C$, 
and we are not in the previous case. Then they are exactly two, and they are adjacent to 
consecutive vertices $i,i+1$ of $C$. Therefore $i+3$ is dominated by a vertex 
of $S$ with two neighbors in $C$, but the other dominated 
vertex should be $i$ or $i+1$, a contradiction.

Consequently, there is exactly one element $x \in S$ with only one neighbor in $C$. 
Let $y$ and $z$ be two other elements of $S$. Then, w.l.o.g. $x$ is adjacent to $1$, 
$y$ to $2,4$ and $z$ to $3,5$. Observe that $xy$, $xz$ and $yz$ have type 1 since 
they can be pairwise connected by a path of length 4 in $C$. 
Only in this step of the proof do cases of $S \subseteq N_{\ell+1}$ or $N_{\ell-1}$ 
differ. In the former case, we consider $P_{xy}$ to be the $xy$-path of length $4$ 
with internal vertices in $C$, and in the latter we consider it to be $U_{xy}$. Note 
that since, in the latter case, we can form an induced cycle of trinity $u_{xy}$ by 
considering $u_{xy}$ closed with the $xy$-path of trinity $0$ with internal vertices 
in $C$, we know that $u_{xy}=1$. Now the same considerations can be made about the two 
other pairs. We merge again the two cases of $S \subseteq N_{\ell+1}$ or $N_{\ell-1}$: 
regardless of that, we have for each pair a path $P_{xy}$ of trinity $1$ with internal 
vertices in the upper levels of $S$. In particular, every vertex $a \in N_{\ell-2}$ is 
adjacent to at most one element of $\{x,y,z\}$. As a consequence, $z$ is adjacent to no 
vertex in $U_{xy}$, and symmetrically. By Fact~\ref{fact:asymmetricut}, at most one of 
$x,y,z$ cuts the two other. W.l.o.g. we can assume that $z$ does not cut $xy$ and $x$ 
does not cut $yz$. Then we can find two independent $xy$ and $yz$-paths in the TCP, 
both of trinity $1$. By combining them with $P_{xz}$, we obtain a cycle of trinity 
$0$, a contradiction.
\end{proof}

\begin{lemma}\label{lem:antishadowC5}
 If $G$ is a trinity graph inducing an antishadow $X_0,X_1,\ldots,X_4$, 
then $X_0$ induces no $5$-cycle.
\end{lemma}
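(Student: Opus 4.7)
Suppose for contradiction that $X_0$ induces a $5$-cycle $C$ with vertices $1,2,3,4,5$ in cyclic order. The plan is to extract a $3$-vertex stable set in $X_1$ governing $C$ via Lemma~\ref{lem:dominatingC5}, then combine induced paths between these three vertices to produce an induced cycle of trinity $0$, contradicting $G$ being a trinity graph.

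Concretely, since $X_1$ is stable and dominates $X_0$, pick a minimal stable subset $S \subseteq X_1$ that dominates $C$. Because $X_1$ is rich in a TCP inside $X_2 \subseteq N_{\ell+2}$ and $C \subseteq N_\ell$ is automatically independent of $X_2$ (the BFS levels differ by $2$), Lemma~\ref{lem:dominatingC5} applies and yields $|S|=3$; write $S=\{x,y,z\}$. Each of $x,y,z$ then has exactly two (non-adjacent) neighbors in $C$, and up to symmetry we may assume $x \sim \{1,3\}$, $y \sim \{2,4\}$, $z \sim \{3,5\}$. A short enumeration of the induced $C$-paths between pairs of $\{x,y,z\}$, together with Fact~\ref{fact:richtype}, forces $xy$ and $yz$ to have type $1$ and $xz$ to have type $2$.

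The main step is to assemble an induced cycle of trinity $0$ from one $C$-path and two $X_2$-block-paths. A trinity computation leaves exactly two viable combinations: Option A uses block-paths for $xy$ and $xz$ plus the $C$-path $y\text{-}4\text{-}5\text{-}z$; Option C uses block-paths for $xz$ and $yz$ plus the $C$-path $x\text{-}1\text{-}2\text{-}y$. The chosen $C$-paths are chord-free with respect to the third member of $S$, so the remaining obstacle is a cut from the third vertex on one of the two block-paths. Applying Fact~\ref{fact:asymmetricut} to the type-$1$ pairs $xy$ and $yz$ forbids most cut configurations, leaving two residual cases: (I) $y$ cuts $xz$ (forcing $x$ non-cut of $yz$ and $z$ non-cut of $xy$), or (II) $y$ does not cut $xz$ but both $x$ cuts $yz$ and $z$ cuts $xy$.

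Both residuals are resolved by a pivot argument. In (I), take a typical $X_2$-block with a trinity-$2$ $xz$-path $x\text{-}c_1\text{-}\cdots\text{-}c_{m-1}\text{-}z$ and let $c_j$ be the first $y$-neighbor on it; the induced subpaths $x\text{-}c_1\text{-}\cdots\text{-}c_j\text{-}y$ and $y\text{-}c_j\text{-}\cdots\text{-}c_{m-1}\text{-}z$ have trinities constrained (via Fact~\ref{fact:richtype} applied against a typical block-path of the corresponding pair) to lie in $\{0,1\}$, forcing $j \equiv 0$ or $2 \pmod 3$; in each residue the derived subpath of trinity $0$ combines with the corresponding length-$3$ $C$-path of trinity $0$ to yield an induced cycle of trinity $0$. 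Case (II) is handled symmetrically via the first $z$-neighbor on a typical trinity-$1$ $xy$-block-path, producing derived $xz$- and $yz$-subpaths whose trinities are constrained (by types $2$ and $1$) so that the same pairing with trinity-$0$ $C$-paths works in every residue; when both the first and last $z$-neighbors coincide at position $\equiv 1 \pmod 3$, one uses Fact~\ref{fact:asymmetricut} applied to the triple $(x,z,y)$ to obtain common neighbors of $x$ and $z$, and the resulting length-$2$ $xz$-path is spliced with $y\text{-}4\text{-}5\text{-}z$ and an appropriate second-pivot $xy$-block path of compatible residue to close a trinity-$0$ cycle. The main obstacle is precisely this pivot analysis in the residual configurations; once completed, every case yields an induced trinity-$0$ cycle, contradicting that $G$ is a trinity graph.
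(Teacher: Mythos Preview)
Your approach diverges sharply from the paper's, and I believe it has a genuine gap in the pivot analysis. The paper does \emph{not} try to close a trinity-$0$ cycle using only $X_0,X_1,X_2$. Instead, after obtaining $\{x,y,z\}$ dominating $C$, it climbs: it finds a common neighbour $z'\in X_2$ of $y,z$ (forced by Fact~\ref{fact:richtype}), a neighbour $x'\in X_2$ of $x$, a common neighbour $z''\in X_3$ of $x',z'$ (again forced), and a neighbour $z'''\in X_4$ of $z''$. Then it uses the \emph{freely closable} property to route an induced path from $z'''$ up to a neighbour of $5$ in $N_{\ell-1}$, and closes three cycles of pairwise different trinities. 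Your argument never invokes $X_3$, $X_4$, or free closability, so you are in effect claiming a strictly stronger statement than the lemma.

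The concrete problem is your Case~(I). You take a trinity-$2$ $xz$-path $x=p_0,p_1,\dots,p_m=z$ in a typical block and let $p_j$ be the \emph{first} $y$-neighbour; you then assert that both $x\,p_1\cdots p_j\,y$ and $y\,p_j\cdots p_{m-1}\,z$ are induced. The second one need not be: $y$ may also see some $p_k$ with $k>j$. If instead you use the first neighbour $p_j$ for the $xy$-subpath and the last neighbour $p_{j'}$ for the $yz$-subpath, your residue constraints become $j\not\equiv 1$ and $j'\not\equiv 1\pmod 3$, and your ``trinity-$0$ subpath plus trinity-$0$ $C$-path'' trick requires $j\equiv 2$ or $j'\equiv 0$. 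The case $j\equiv 0$, $j'\equiv 2$ survives (for instance $m=8$, $j=3$, $j'=5$), and I do not see how to close a trinity-$0$ induced cycle there: the $xy$-subpath and $yz$-subpath both have trinity $1$, the spliced $xz$-path through $y$ has trinity $2$, and the only $xz$ $C$-paths have trinities $0$ and $2$, so every combination I can find lands on trinity $1$ or $2$. A symmetric residual ($j\equiv j'\equiv 1$ with $j<j'$) appears in your Case~(II), and your sketch for $j=j'$ there (``spliced with $y\text{-}4\text{-}5\text{-}z$ and an appropriate second-pivot $xy$-block path'') is not a proof: any fresh $xy$-block-path is cut by $z$, which is precisely the obstruction you are trying to avoid. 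Unless you can close these residue cases, the argument is incomplete; the paper sidesteps the whole issue by exploiting the extra antishadow layers.
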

\begin{proof}
 Assume for contradiction that $X_0$ induces a $5$-cycle $C$ with vertex 
set $V(C)=\{1,2,3,4,5\}$. Let $X$ be a minimal dominating set of $C$ in $X_1$. 
By Lemma~\ref{lem:dominatingC5}, we have w.l.o.g. $X=\{x,y,z\}$, respectively 
 joined to $\{1,3\}$, $\{2,4\}$ and $\{2,5\}$.

If $y,z$ do not have a common neighbor in $X_2$, then for any $y'$ and 
$z'$ respective neighbors of $y$ and $z$ in $X_2$, the path $z'z54yy'$ has trinity 2 and 
$z'z2yy'$ has trinity 1, a contradiction to Fact~\ref{fact:richtype} applied to $y',z'$.

So $y$ and $z$ have a common neighbor $z'$ in $X_2$. 
Observe that to avoid $6$-cycles, $x$ and $y$ (resp. $z$) do not have a common neighbor in $X_2$.
Let $x'$ be a neighbor of $x$ in $X_2$. 

If $z',x'$ do not have a common neighbor in $X_3$, then for any $z'',x''$ respective 
neighbors of $z',x'$ in $X_3$, the paths $z''z'z51xx'x''$
and $z''z'z543xx'x''$ have respective trinities 1 and 2,
a contradiction to Fact~\ref{fact:richtype}.

Thus $z',x'$ have a common neighbor $z''$ in $X_3$. Now consider 
a neighbor $z'''$ of $z''$ in $X_4$. Since the antishadow is 
freely closable, there exists an $rz'''$-path $P$ whose vertices
(save $z'''$) are independent from $C,x,y,z,x',z',z''$. Let $5'$
be a neighbor of $5$ in  $N_{\ell -1}$. We can consider the lowest 
common root of $5'$ and $P$ in the BFS, which may be an element of $P$.
That way, we get an induced $5'z'''$-path $Q$ whose internal vertices
are independent from $C,x,y,z,x',z',z''$. Note that $z'''z''z'z5Q$ and
$z'''x''x'x15Q$ are induced cycles. If $z'''x''x'x345Q$
is also induced, then we have three induced cycles all of different trinities, 
a contradiction. Consequently, $5'$ and $3$ are adjacent: in particular, $5'$ is 
not adjacent to $2$. Then we consider the induced cycle $z'''z''z'y215Q$ and 
reach the same contradiction.
\end{proof}

An {\it extended $C_5$} is a graph on nine vertices consisting of a $C_5=\{1,2,3,4,5\}$
with three additional pairwise non-adjacent vertices $x,y,z$ respectively linked to
 $\{1,3\}$, $\{2,4\}$ and $\{2,5\}$, and finally an additional vertex $w$ linked
 only to $y$ and $z$ (see Figure~\ref{fig:extendedC5})
 
 \begin{figure}[t]
\centering
\begin{tikzpicture}[scale=0.7,auto]
%\tikzstyle{whitenode}=[draw,circle,fill=white,minimum size=5pt,inner sep=0pt]
\tikzstyle{blacknode}=[draw,circle,fill=black,minimum size=5pt,inner sep=0pt]
\tikzstyle{greennode}=[draw,circle,fill=green,minimum size=3pt,inner sep=0pt]
\tikzstyle{greennodeb}=[draw,circle,fill=green,minimum size=3pt,inner sep=0pt]
\tikzstyle{spring}=[thick,decorate,decoration={zigzag,pre length=0.05cm,post length=0.05cm,segment length=6}];
\tikzstyle{springb}=[thick,decorate,decoration={zigzag,pre length=0.3cm,post length=0.3cm,segment length=6}];
\foreach \i in {1,...,3}
{\draw (0,0)
++(-360*\i/5+90+2*72:2) node[blacknode] (a\i) [label=90:$\i$] {};}
\foreach \i in {4,...,5}
{\draw (0,0)
++(-360*\i/5+90+2*72:2) node[blacknode] (a\i) [label=-90:$\i$] {};}
\draw (0,0)
++(-360*5/5+90+2*72:2) node[blacknode] (a0) {};

\foreach \i in {1,...,5}
{\pgfmathtruncatemacro{\j}{mod(\i,5)+1}
\draw (a\i) -- (a\j);}

\draw (0,0)
++(-360*2/5+90+2*72:4) node[blacknode] (b2) [label=90:$x$] {};  
\draw (0,0)
++(-360*1/5+90+2*72:4) node[blacknode] (b1) [label=90:$z$] {};  
\draw (0,0)
++(-360*3/5+90+2*72:4) node[blacknode] (b3) [label=90:$y$] {};  
\foreach \i in {1,...,3}
{\pgfmathtruncatemacro{\j}{mod(\i+1,5)}
\pgfmathtruncatemacro{\k}{mod(\i-1,5)}
\draw (b\i) -- (a\j);
\draw (b\i) -- (a\k);}

\draw (0,-3) node[blacknode] (c) [label=-90:$w$] {};  
\draw (b1) edge [bend right] node {} (c);
\draw (b3) edge [bend left] node {} (c);
  
\end{tikzpicture}
\caption{An extended $C_5$.}
\label{fig:extendedC5}
\end{figure}
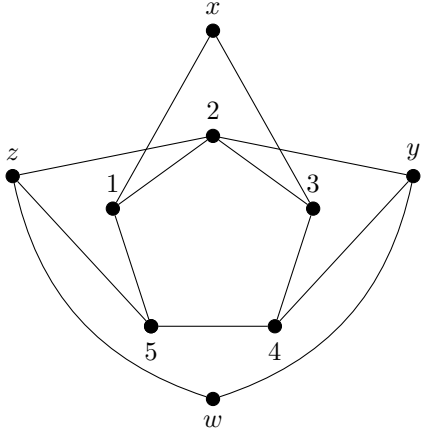

\begin{theorem} \label{C5}
If the chromatic number of trinity graphs not inducing  
$C_5$ is bounded, then  the chromatic number of trinity graphs not inducing 
extended $C_5$ is also bounded.
\end{theorem}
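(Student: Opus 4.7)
Assume for contradiction that $G$ is a connected trinity graph with arbitrarily large chromatic number and no induced extended $C_5$. Fix a vertex $r$ and apply Theorem~\ref{shadow}. If the output is an antishadow $(X_0,X_1,X_2,X_3,X_4)$, then Lemma~\ref{lem:antishadowC5} shows that $G[X_0]$ is a trinity graph with large chromatic number and no induced $C_5$, contradicting the hypothesis of the theorem.

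So we are in the shadow case: $X\subseteq N_{\ell+1}(r)$ has large chromatic number and is dominated by a stable set $S\subseteq N_\ell(r)$. Coloring $X$ by an arbitrary choice of dominator in $S$ is proper (else a triangle would appear in the triangle-free $G$), so $\chi(X)\leq|S|$ and in particular $|S|$ is large. The goal is now to exhibit an induced extended $C_5$ inside $G$.

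Apply Theorem~\ref{incrust} to a large-chromatic component $Y$ of $G[X]$ with $H=C_5$: the $C_5$-free alternative would contradict the hypothesis, so every vertex of $Y$ is the origin of an arbitrarily long TCP whose blocks $B_1,\dots,B_k$ are induced $5$-cycles of $X$, pairwise disjoint and non-adjacent whenever $|i-j|\geq 2$. Each $B_i$ admits a minimal stable dominating subset $S_i\subseteq S$ of bounded size. Since the number of such subsets is polynomial in $|S|$ and the number of attachment patterns of $S_i$ to a labelled $5$-cycle is bounded, choosing $k$ sufficiently large (relative to $|S|$) and pigeonholing yields a large subfamily $\mathcal F$ of blocks all dominated by the \emph{same} triple $\{x,y,z\}\subseteq S$ in the \emph{same} attachment pattern. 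The blocks of $\mathcal F$ form a rich family for $\{x,y,z\}$ independent of any fixed $B_1\in\mathcal F$, and Lemma~\ref{lem:dominatingC5} (with $\mathcal F$ playing the role of the TCP, an application whose proof only uses the rich-family aspect) forces the attachment on $B_1=\{1,2,3,4,5\}$ (and analogously on every $B_j=\{1_j,2_j,3_j,4_j,5_j\}$) to be, up to relabelling, $x\to\{1,3\}$, $y\to\{2,4\}$, $z\to\{2,5\}$.

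Pick any other $B_j\in\mathcal F$ with $|1-j|\geq 2$ and let $2_j\in B_j$ be the common neighbor of $y$ and $z$ in $B_j$. Then $2_j$ is non-adjacent to $\{1,2,3,4,5\}$ since $B_j$ is disjoint from and non-adjacent to $B_1$; moreover, by the common attachment pattern $x\sim 1_j$ and $1_j\sim 2_j$ in $B_j$, triangle-freeness of $G$ automatically forces $x\not\sim 2_j$. Hence $\{1,2,3,4,5,x,y,z,2_j\}$ induces an extended $C_5$, contradicting the hypothesis. The main obstacle is the pigeonhole step extracting from the potentially unboundedly many dominating subsets $S_i\subseteq S$ a single triple $\{x,y,z\}$ common to a large subfamily of blocks; this is handled by making the $C_5$-TCP arbitrarily long compared to $|S|$, which is legitimate in the large-vs-bounded framework of the paper.
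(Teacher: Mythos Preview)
Your proof has a genuine gap in the pigeonhole step. You want to extract, from a TCP with $k$ blocks each containing a $C_5$, a large subfamily of blocks whose $C_5$'s are all dominated by the \emph{same} triple $\{x,y,z\}\subseteq S$ with the same attachment. For this you need $k$ to exceed roughly $|S|^3$. But $|S|$ is not a bounded quantity: you yourself observe that $|S|\ge \chi(X)$, and the length of any TCP grown inside $X$ is controlled by $\chi(X)$. There is no mechanism that makes $k$ outpace a polynomial in $|S|$ when $|S|\ge \chi(X)$. In the paper's ``large vs.\ bounded'' calculus one may take a large parameter to beat any bounded one, but comparing two large parameters is precisely what the framework does not allow; your final sentence asserting that this is ``legitimate'' is where the argument fails. (A minor side issue: the blocks produced by Theorem~\ref{incrust} are not themselves $C_5$'s; each block merely \emph{contains} an appended $C_5$. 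This is easily patched by working with the $C_5$ inside each block, so it is not the real problem.)

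The paper avoids the pigeonhole entirely. It takes a \emph{single} $C_5$ in a late block of the TCP and observes that its minimal dominating set $\{x,y,z\}\subseteq S$ is automatically rich in the earlier blocks: this is the key TCP property from the end of Section~2 combined with the up-path argument of the lemma in Section~3 (any vertex of $N_\ell$ seeing a late block must see every earlier pair of consecutive blocks, else a trinity-$0$ cycle closes through an $U$-path). Once richness is in hand, Lemma~\ref{lem:dominatingC5} pins down the attachment pattern. The paper then searches for the vertex $w$ not among the other blocks but in the \emph{upper} level $N_{\ell-1}$: either $y,z$ have a common neighbour there, giving an extended $C_5$ immediately, or they do not, and a contradiction is derived by computing trinities of $U_{xy}$, $U_{yz}$ and invoking Fact~\ref{fact:asymmetricut} on the resulting cuts. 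Your idea of locating $w$ as the vertex $2_j$ in another block is pretty, but it hinges on the same triple dominating many blocks, which you cannot arrange.
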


\begin{proof}
We suppose here that trinity-free $C_{5}$-free graphs have bounded 
chromatic number. Assume for contradiction that $G$ is a connected 
trinity-free extended $C_5$-free graph with large chromatic number. 
Let us fix some root $r$. 
According to Theorem~\ref{shadow} we can split into two cases.

Assume first that $G$ has an antishadow $X_0,X_1,\dots, X_4$
where $X_0$ belongs to $N_{\ell}$. 
Since $X_0$ has large chromatic number (and is obviously trinity-free), 
by assumption we can find a 5-cycle $C$ inside $X_0$, a contradiction to Lemma~\ref{lem:antishadowC5}.

Assume then that $G$ has a shadow. In particular one can find a large 
chromatic number subset $B$ in some $N_{\ell+1}$ which is dominated 
by some stable set $S$ in $N_{\ell}$. Since $G$ does not contain a $C_5$-free graph 
with large chromatic number, by Theorem~\ref{incrust}, we can
find a TCP inside $B$ such that each block contains a $C_5$. In particular,
we can find a 5-cycle $C$ in $B$ minimally dominated by a stable set $X$ in $S$
such that $X$ is rich. Assume that $V(C)=\{1,2,3,4,5\}$. By Lemma~\ref{lem:dominatingC5}, 
we have $|X|=3$ and every vertex in $X$ has exactly two neighbors in $C$. W.l.o.g., $X=\{x,y,z\}$, 
respectively joined to $\{1,3\}$, $\{2,4\}$ and $\{2,5\}$.

Let $x'$, $y'$ and $z'$ be respectively some neighbors of $x$, $y$ and 
$z$ in $N_{\ell-1}$. Note that $x$ is not joined to $y'$ to 
avoid the cycle $xy'y451$. Similarly, $x$ is not joined to $z'$, nor $x'$ to $y$ or $x'$ to $z$.
Note that if $y$ and $z$ have a common neighbor $w'$ in $N_{\ell-1}$, 
then $C\cup \{x,y,z,w'\}$ forms an extended $C_5$.
Therefore the only edges between $x,y,z$ and $x',y',z'$ are $xx',yy',zz'$.

In particular, a path $U_{xy}$ is nonadjacent to $z$. Since none of the two induced
cycles $U_{xy}y21x$ and $U_{xy}y451x$ have trinity 0, the trinity of $U_{xy}$
must be $1$. Hence $z51xU_{xy}y$ is an induced path with trinity 1
from $z$ to $y$.
Note that $y2z$ is an induced path of trinity 2, hence the type of $yz$ is 2.
Since $z51xU_{xy}y$ has trinity 1, $x$ must cut $yz$.
There exists a path $U_{yz}$ which is nonadjacent to $x$. Note that the trinity of $U_{yz}$
must be 2 since both $U_{yz}z2y$ and $U_{yz}z54y$ are induced cycles. We then have the trinity 
2 path $x34yU_{yz}z$. Moreover, since $x345z$ is an 
induced path of trinity 1, the type of $xz$ is 1.
Thus $y$ must cut $xz$, a contradiction to Fact~\ref{fact:asymmetricut} since $xy$ has type 1.
 \end{proof}

A {\it doubly extended $C_5$} is any graph on thirteen vertices constructed as follows:
We start with an extended $C_5$ on vertex set $\{1,2,3,4,5,x,y,z,w\}$.
We add three new pairwise non 
adjacent vertices $x',y',z'$ respectively linked to
 $\{3,5\}$, $\{1,4\}$ and $\{2,4\}$, and finally a new vertex $w'$ linked
 to $y'$ and $z'$ (see Figure~\ref{fig:doublyextendedC5}). 
 To the edges described so far, one can add any new edge
between $x',y',z',w'$ and $x,y,z,w$ provided that this does not create trinity
0 induced cycles.

 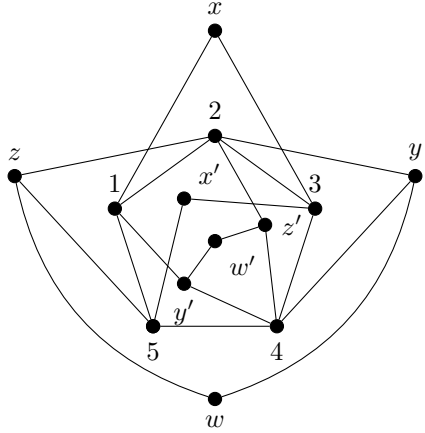
\begin{figure}[t]
\centering
\begin{tikzpicture}[scale=0.7,auto]
%\tikzstyle{whitenode}=[draw,circle,fill=white,minimum size=5pt,inner sep=0pt]
\tikzstyle{blacknode}=[draw,circle,fill=black,minimum size=5pt,inner sep=0pt]
\tikzstyle{greennode}=[draw,circle,fill=green,minimum size=3pt,inner sep=0pt]
\tikzstyle{greennodeb}=[draw,circle,fill=green,minimum size=3pt,inner sep=0pt]
\tikzstyle{spring}=[thick,decorate,decoration={zigzag,pre length=0.05cm,post length=0.05cm,segment length=6}];
\tikzstyle{springb}=[thick,decorate,decoration={zigzag,pre length=0.3cm,post length=0.3cm,segment length=6}];
\foreach \i in {1,...,3}
{\draw (0,0)
++(-360*\i/5+90+2*72:2) node[blacknode] (a\i) [label=90:$\i$] {};}
\foreach \i in {4,...,5}
{\draw (0,0)
++(-360*\i/5+90+2*72:2) node[blacknode] (a\i) [label=-90:$\i$] {};}
\draw (0,0)
++(-360*5/5+90+2*72:2) node[blacknode] (a0) {};

\foreach \i in {1,...,5}
{\pgfmathtruncatemacro{\j}{mod(\i,5)+1}
\draw (a\i) -- (a\j);}

\draw (0,0)
++(-360*2/5+90+2*72:4) node[blacknode] (b2) [label=90:$x$] {};  
\draw (0,0)
++(-360*1/5+90+2*72:4) node[blacknode] (b1) [label=90:$z$] {};  
\draw (0,0)
++(-360*3/5+90+2*72:4) node[blacknode] (b3) [label=90:$y$] {};  
\foreach \i in {1,...,3}
{\pgfmathtruncatemacro{\j}{mod(\i+1,5)}
\pgfmathtruncatemacro{\k}{mod(\i-1,5)}
\draw (b\i) -- (a\j);
\draw (b\i) -- (a\k);}

\draw (0,-3) node[blacknode] (c) [label=-90:$w$] {};  
\draw (b1) edge [bend right] node {} (c);
\draw (b3) edge [bend left] node {} (c);

\draw (0,0)
++(-360*4/5+90+2*72:-1) node[blacknode] (d4) [label=45:$x'$] {};  
\draw (0,0)
++(-360*5/5+90+2*72:1) node[blacknode] (d5) [label=-90:$y'$] {};  
\draw (0,0)
++(-360*3/5+90+2*72:1) node[blacknode] (d3) [label=right:$z'$] {};  
\foreach \i in {3,...,5}
{\pgfmathtruncatemacro{\j}{mod(\i+1,5)}
\pgfmathtruncatemacro{\k}{mod(\i-1,5)}
\draw (d\i) -- (a\j);
\draw (d\i) -- (a\k);}

\draw (0,0)
++(-360*4/5+90+2*72:0) node[blacknode] (e) [label=-45:$w'$] {};  
\draw (d3) edge  node {} (e);
\draw (d5) edge  node {} (e);
\end{tikzpicture}
\caption{A doubly extended $C_5$.}
\label{fig:doublyextendedC5}
\end{figure}

\begin{theorem} \label{extendedC5}
If the chromatic number of trinity graphs not inducing  
an extended $C_5$ is bounded, then the chromatic number of trinity graphs not inducing 
a doubly extended $C_5$ is also bounded.
\end{theorem}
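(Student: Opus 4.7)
The plan is to follow the proof of Theorem~\ref{C5} closely, replacing the role of $C_5$ by the extended $C_5$ throughout. Assume for contradiction that $G$ is a connected trinity graph with large chromatic number that induces no doubly extended $C_5$; fix a root $r$. By Theorem~\ref{shadow}, $G$ contains a shadow or an antishadow.

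Suppose first that $G$ admits an antishadow $X_0,\ldots,X_4$. Then $X_0$ is a trinity graph of large chromatic number, so by our hypothesis (extended $C_5$-free trinity graphs have bounded chromatic number) it must induce an extended $C_5$, and in particular a $5$-cycle; this directly contradicts Lemma~\ref{lem:antishadowC5}. So we may assume that $G$ has a shadow: a large chromatic set $B\subseteq N_{\ell+1}$ dominated by a stable set $S\subseteq N_{\ell}$. Applying Theorem~\ref{incrust} to $G[B]$ with $H$ the extended $C_5$, case~2 would contradict our hypothesis, so we obtain a TCP inside $B$ whose every block induces an extended $C_5$. Combined with the standard richness argument of the shadow step of Theorem~\ref{C5}, this produces an extended $C_5$ on vertices $\{1,\ldots,5,a,b,c,d\}\subseteq B$ (with $a$ linked to $\{1,3\}$, $b$ to $\{2,4\}$, $c$ to $\{2,5\}$ and $d$ only to $b$ and $c$) whose cycle $C=\{1,\ldots,5\}$ is minimally dominated by a rich stable set $\{x',y',z'\}\subseteq S$. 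By Lemma~\ref{lem:dominatingC5}, each of $x',y',z'$ has exactly two neighbors in $C$.

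The crux is a dichotomy on whether the two primed vertices whose neighborhoods in $C$ overlap --- the double-cover pair, which we denote by $y'$ and $z'$ --- share a common neighbor $w'$ in $N_{\ell-1}$. If they do, a short forbidden-cycle argument of exactly the kind used in Theorem~\ref{C5} to rule out edges like $xy'$ forces $w'$ to be non-adjacent to $x'$, and then the $13$ vertices $\{1,\ldots,5,a,b,c,d,x',y',z',w'\}$ induce a doubly extended $C_5$ --- the primed extended-$C_5$ structure living on $C\cup\{x',y',z',w'\}$ independently of $d$ --- contradicting our assumption on $G$. If the rotational position of $\{x',y',z'\}$ on $C$ does not exactly match the labels of the definition, the same style of cycle analysis either yields a doubly extended $C_5$ after relabeling, or produces an induced trinity $0$ cycle directly.

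Otherwise, $y'$ and $z'$ have no common neighbor in $N_{\ell-1}$. Picking respective neighbors $x'',y'',z''\in N_{\ell-1}$ of $x',y',z'$, the same forbidden $6$-cycles as in the proof of Theorem~\ref{C5} show that the only edges between $\{x',y',z'\}$ and $\{x'',y'',z''\}$ are $x'x''$, $y'y''$ and $z'z''$; the type-analysis step of Theorem~\ref{C5} then runs verbatim, computing the trinity types of the three pairs in $\{x',y',z'\}$ via short paths through $C$ and a $U$-path through the upper levels independent of the third primed vertex, and concluding via Fact~\ref{fact:asymmetricut}. The main obstacle is to verify that the extra vertex $d\in B$, which is absent from the $C_5$-setting of Theorem~\ref{C5}, cannot interfere with any of the forbidden-cycle constraints used along the way: this is immediate since $d\in N_{\ell+1}$ is two levels above $N_{\ell-1}$ and so is not adjacent to any vertex used by the $U_{x'y'}$-type paths, and its only attachments to $C$ go through $b$ and $c$, which have already been pinned down.
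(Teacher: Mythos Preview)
Your overall scaffold (antishadow via Lemma~\ref{lem:antishadowC5}, shadow via Theorem~\ref{incrust}, then Lemma~\ref{lem:dominatingC5} on the inner $C_5$) matches the paper, and your dichotomy on whether the ``double-cover pair'' $y',z'$ has a common neighbor $w'\in N_{\ell-1}$ is exactly how the paper concludes its \emph{last} case. The gap is the sentence you use to dispose of the other rotational positions: ``the same style of cycle analysis either yields a doubly extended $C_5$ after relabeling, or produces an induced trinity $0$ cycle directly.'' This is where almost all of the work lies, and it does not go through as stated.

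The definition of a doubly extended $C_5$ fixes a \emph{specific} rotational offset between the two extensions: the first has $x,y,z$ attached at $\{1,3\},\{2,4\},\{2,5\}$ (double-covered vertex $2$), while the second has $x',y',z'$ attached at $\{3,5\},\{1,4\},\{2,4\}$ (double-covered vertex $4$). Up to the reflection fixing the first extension there are three essentially different ways $\{x',y',z'\}$ can sit on $C$: same position (double-covered vertex $2$), adjacent position (double-covered vertex $1$ or $3$), and opposite position (double-covered vertex $4$ or $5$). Only the last one gives a doubly extended $C_5$; in the first two your dichotomy fails, because when $y',z'$ do share a neighbor $w'\in N_{\ell-1}$ the resulting $13$-vertex graph is \emph{not} a doubly extended $C_5$ under any relabeling, and when they do not, the type-analysis from Theorem~\ref{C5} uses only paths through $C$ and therefore cannot distinguish these bad positions from the good one. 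The paper eliminates the two bad positions by a genuinely different mechanism: it exploits the extra vertices $a,b,c,d$ of the extended $C_5$ in $B$ (your notation) to manufacture \emph{other} $5$-cycles inside $B$ --- such as $b2154$, $1a345$, and $54bdc$ --- and then reapplies Lemma~\ref{lem:dominatingC5} to minimal dominating subsets of $X'$ for those cycles, forcing additional adjacencies that eventually close a $C_6$ or violate the ``one neighbor'' conclusion of that lemma. In short, the vertices $a,b,c,d$ are not a potential obstacle to be checked harmless, as your last paragraph suggests; they are the tool that rules out the wrong rotational positions. Without that case analysis the argument is incomplete.
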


\begin{proof}
We assume that trinity-free extended $C_5$-free graphs have bounded chromatic number, let $G$ be a connected trinity free graph with large chromatic number, and fix some root $r$. We are going to prove that it must contain an induced doubly extended $C_{5}$ or get a contradiction. According to Theorem~\ref{shadow} we have a shadow or an antishadow.

As in the proof of Theorem~\ref{C5}, if $G$ contains an antishadow, then the set $X_{0}$ must contain an extended $C_{5}$, hence an induced $C_{5}$, which contradicts Lemma~\ref{lem:antishadowC5}, so we can assume that there is a shadow. In particular one can find a large chromatic number subset $B$ in some $N_{\ell+1}$ which is dominated 
by some stable set $S$ in $N_{\ell}$. Since trinity-free extended $C_5$-free graphs have bounded chromatic number, by Theorem~\ref{incrust}, we can
find a TCP inside $B$ such that each block contains an extended $C_5$. In particular,
we can find an extended 5-cycle $C$ in $B$ with $V(C)=\{1,2,3,4,5,x,y,z,w\}$ which is 
minimally dominated by a stable set $X'$ in $S$
such that $X'$ is rich. Let $X \subseteq X'$ a minimal dominating set of 
$\{1,2,3,4,5\}$. By Lemma~\ref{lem:dominatingC5}, $X$ has three vertices, each joined 
to exactly two vertices of $\{1,2,3,4,5\}$. Up to symmetry, here are
the different cases:

\begin{enumerate}
 \item $X$ has three vertices $x',y',z'$ respectively 
 joined to $\{1,3\}$, $\{2,4\}$ and $\{2,5\}$.

We claim that there is some $a' \in X'$ that dominates 
both $1$ and $y$. Otherwise, we have in particular that $x'$ 
is not adjacent to $y$. Let $a'$ be a neighbor of $y$ in $X'$. 
Then any minimal dominating set of the $5$-cycle $y2154$ 
contained in $\{x',a',y',z'\}$ must contain $x'$, a contradiction to Lemma~\ref{lem:dominatingC5}.

Similarly, there is a vertex $b'$ of $X'$
dominating both $z$ and 3. Note that in particular, $b',a'$ do not see 
4, 5. Let $c'$ be a neighbor of $w$ in $X'$.
Note that $c'$ is not adjacent to $5$ ($c'wya'15$ 
would induce a $6$-cycle) nor adjacent to $4$ ($c'wzb'34$ would induce a $6$-cycle).
In particular, neither $z'$ nor $y'$ are adjacent to $w$.
Observe now that $54ywz$ is minimally dominated by some set $S \subseteq \{y',z',a',b',c'\}$. 
Note that $c' \in S$ since $w$ is adjacent to no other element of $\{y',z',a',b',c'\}$. 
Note that $c'$ has only one neighbor in $54ywz$, a contradiction to Lemma~\ref{lem:dominatingC5}.

 \item $X$ has three vertices $x',y',z'$ respectively 
 joined to $\{2,4\}$, $\{3,5\}$ and $\{1,3\}$.

We claim that there is a choice of $y'$ and $z'$ such that both are adjacent to $y$. First of all, note that if one sees $y$, then the other one must also see $y$ (since $y2154$ is then minimally 
dominated by $x',y',z'$, and Lemma~\ref{lem:dominatingC5} applies). We assume for contradiction that 
none of $y',z'$ can be chosen adjacent to $y$.
Let $a'$ be a neighbor of $y$ in $X'$. Let $S \subseteq \{x',y',z',a'\}$ be a minimal 
dominating set of the $5$-cycle $y2154$. Then $a'$ cannot be adjacent to both $1$ and $5$. 
Assume w.l.o.g. that it is not adjacent to $1$. To avoid a contradiction with Lemma~\ref{lem:dominatingC5}, 
we must have $a'$ adjacent to $5$ and $z'$ to $y$. Now, by a similar argument on $\{x',z',a'\}$ 
and the $5$-cycle $12345$, we obtain that $a'$ is adjacent to $2$ or $3$. Because the previous 
case did not apply, $a'$ must be adjacent to $3$, a contradiction to the choice of $y'$. 

Let $b'$ be some vertex dominating $x$ (where we choose
$b'=x'$ if indeed $x'$ dominates $x$). If $b'$ does not dominate 4, 
which implies $b'\neq x'$, the cycle $1x345$ is minimally dominated by 
a subset of $\{x',y',z',b'\}$. Since $x'$ only sees the vertex 4 of $1x345$
and is the only neighbor of $4$, we reach 
a contradiction to Lemma~\ref{lem:dominatingC5}. Thus $b'$ dominates 
both $x$ and 4. Finally $1xb'4yz'$ induces a $6$-cycle, a contradiction.

 \item In the last case, $X$ has three vertices $x',y',z'$ respectively 
 joined to $\{3,5\}$, $\{1,4\}$ and $\{2,4\}$. To get a {\it doubly extended $C_5$}, we just need to prove that there exists a vertex $w'$ in $N_{\ell -1}$ which sees $y'$ and $z'$ but not $x'$. This is exactly the same argument
ending the proof of Theorem~\ref{C5}.
\end{enumerate}
 
\end{proof}

Now we prove the last theorem that will enable us in the next sections to focus on $C_{5}$-free graphs.

\begin{theorem} \label{excludingC5}
If the chromatic number of trinity graphs not inducing doubly extended-$C_5$ is bounded,
then the chromatic number of trinity graphs is also bounded.
\end{theorem}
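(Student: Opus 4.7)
The plan is to mirror the structure of the proofs of Theorems~\ref{C5} and~\ref{extendedC5}. I assume for contradiction that $G$ is a connected trinity graph with large chromatic number, under the hypothesis that trinity graphs not inducing a doubly extended $C_5$ have bounded chromatic number. Fix a root $r$ and apply Theorem~\ref{shadow}. In the antishadow case, $X_0$ is a trinity graph with large chromatic number, so by the hypothesis it must induce a doubly extended $C_5$ and in particular a $C_5$, contradicting Lemma~\ref{lem:antishadowC5}.

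In the shadow case, we obtain a large chromatic number subset $B \subseteq N_{\ell+1}$ dominated by a stable set $S \subseteq N_\ell$. Applying Theorem~\ref{incrust} to $G[B]$ with $H$ a doubly extended $C_5$, the alternative conclusion of an $H$-free induced subgraph with large chromatic number would violate the hypothesis; hence there is a TCP inside $B$ whose every block induces a doubly extended $C_5$. From this TCP I extract a doubly extended $C_5$ $C^{*}=C \cup \{x,y,z,w,x',y',z',w'\}$ with $C=\{1,2,3,4,5\}$, minimally dominated by a rich stable set $X'' \subseteq S$. Let $X = \{a,b,c\} \subseteq X''$ be a minimal dominating set of $C$ provided by Lemma~\ref{lem:dominatingC5}, with each of $a,b,c$ having exactly two neighbors on $C$. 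Up to symmetry, $\{a,b,c\}$ falls into one of the three attachment patterns considered in the proof of Theorem~\ref{extendedC5}.

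In Cases~1 and~2, the same 6-cycle and secondary-5-cycle arguments used in that proof yield direct contradictions; the additional vertices $x,y,z,w,x',y',z',w'$ present in $C^{*}$ only enrich the collection of forbidden configurations that can be invoked. Case~3, where $\{a,b,c\}$ has the ``opposite'' pattern matching that of $\{x',y',z'\}$, is the crux. Following the argument at the end of Theorem~\ref{C5}, I consider up-paths $U_{ab}, U_{bc}, U_{ac}$; once cross-edges between $\{a,b,c\}$ and their respective witnesses in $N_{\ell-1}$ are ruled out by 6-cycles against $C$, these up-paths are independent of the third vertex. Determining the type of each pair via induced cycles formed with $C$ and then appealing to Fact~\ref{fact:asymmetricut} produces the desired trinity-$0$ cycle.

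The main obstacle is precisely Case~3: unlike in Theorems~\ref{C5} and~\ref{extendedC5}, there is no next-level forbidden subgraph available to aim for, so the contradiction must be extracted directly from the trinity assumption via path combinatorics. The key observation is that $C^{*}$ already provides two triangle extensions of $C$ on the block side, and adjoining the upper-level triangle $\{a,b,c\}$ saturates the structure enough that induced paths constructible through $C^{*}$ exhibit all three trinities; combined with the richness of $\{a,b,c\}$ in the TCP and Fact~\ref{fact:asymmetricut}, this forces the required trinity-$0$ cycle, closing the chain of implications.
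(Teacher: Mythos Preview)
Your overall skeleton is right (antishadow via Lemma~\ref{lem:antishadowC5}, then Theorem~\ref{incrust} in the shadow case to place a doubly extended $C_5$ inside a block, then Lemma~\ref{lem:dominatingC5} and the three attachment patterns), and Cases~1 and~2 indeed go through verbatim from Theorem~\ref{extendedC5}. The gap is in Case~3.

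You propose to rerun the end of the proof of Theorem~\ref{C5}: take up-paths $U_{ab},U_{bc},U_{ac}$, rule out cross-edges between $\{a,b,c\}$ and the witnesses $a',b',c'\in N_{\ell-1}$ ``by 6-cycles against $C$'', and then finish with types and Fact~\ref{fact:asymmetricut}. But not all cross-edges are killed by 6-cycles. In the pattern $a\sim\{3,5\}$, $b\sim\{1,4\}$, $c\sim\{2,4\}$, the pair $b,c$ plays the role of $y,z$ in Theorem~\ref{C5}; there the crucial step ``$y$ and $z$ have no common neighbour in $N_{\ell-1}$'' was obtained from the assumption that $G$ is extended-$C_5$-free (a common neighbour would create an extended $C_5$). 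You no longer have that assumption. If $b$ and $c$ do share a neighbour $w''\in N_{\ell-1}$, then $c$ may see $b'$ (namely $b'=w''$), so $U_{ab}$ need not be independent of $c$, the trinity computations for $U_{ab},U_{bc}$ collapse, and the appeal to Fact~\ref{fact:asymmetricut} is unsupported. Your last paragraph gestures at the two block-side extensions ``saturating'' the structure, but does not turn this into an argument replacing the missing step.

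The paper closes Case~3 by a different, and simpler, mechanism that actually uses the \emph{second} extension: $\{1,2,3,4,5,x',y',z',w'\}$ is itself an (induced) extended $C_5$, with $x',y',z'$ attached at $\{3,5\},\{1,4\},\{2,4\}$, i.e.\ the same pattern as your $\{a,b,c\}$ in Case~3. Relabelling so that this second extended $C_5$ is in standard position, the dominating triple $\{a,b,c\}$ now lands in Case~1 (or, after the reflection symmetry, Case~2) of Theorem~\ref{extendedC5}, and those arguments give the contradiction directly. In short, the point of the doubly extended $C_5$ is precisely that the two extensions are centred at different vertices of $C$, so whatever centre $\{a,b,c\}$ has, it is at distance at most~1 from one of them; you should invoke the Case~1/2 analysis against the appropriate extension rather than attempt the up-path argument.
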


\begin{proof}
Assume that trinity-free doubly-extended $C_{5}$-free graphs have bounded chromatic number, and let $G$ be a connected trinity-free graphs with large chromatic number and fix some root $r$. According to Theorem~\ref{shadow}, we have a shadow or an antishadow. Again, as noted in the proof of Theorem~\ref{C5}, the antishadow case follows from Lemma~\ref{lem:antishadowC5}, so we can assume that there is a shadow. In particular one can find a large chromatic number subset $B$ in some $N_{\ell+1}$ which is dominated 
by some stable set $S$ in $N_{\ell}$. 

Since we assume that trinity-free doubly-extended $C_5$-free graphs have bounded chromatic number, Theorem~\ref{incrust} implies that we can find a TCP inside $B$ such that each block contains a doubly extended $C_5$. In particular, we can find a doubly extended 5-cycle $C$ in $B$ with $V(C)=\{1,2,3,4,5,x,y,z,w,x',y',z',w'\}$
 which is minimally dominated by a stable set $X'$ in $S$
such that $X'$ is rich. 
Let $X$ included in $X'$ minimally dominating 
$\{1,2,3,4,5\}$. By Lemma~\ref{lem:dominatingC5}, $X$ has three vertices, each joined 
to exactly two vertices of $\{1,2,3,4,5\}$. The only case not leading 
to a contradiction $\{1,2,3,4,5,x,y,z,w\}$ (just like in the proof of Theorem~\ref{extendedC5})
is when $X$ has three vertices $x'',y'',z''$ respectively 
 joined to $\{3,5\}$, $\{1,4\}$ and $\{2,4\}$. But then, as in the proof of 
Theorem~\ref{extendedC5}, we reach a contradiction with $\{1,2,3,4,5,x',y',z',w'\}$.
 
\end{proof}

%%%
\section{Excluding shadows in $C_5$-free trinity graphs}

We now show that we cannot find a shadow in a $C_5$-free trinity graph.
In this section, we assume for contradiction that $G$ is a $C_5$-free trinity graph with a shadow $B$ in $N_{\ell+1}$ 
which is dominated by a stable set $S$ in $N_{\ell}$. We can moreover assume 
that $S$ is rich in a subset $Y$ of $N_{\ell+1}$ which is independent of $B$.
To see this, just grow a large partial TCP $T$ starting from a vertex of $S$ and with all its other vertices in $B$. Here partial refers to the fact 
that $T$ ends at some vertex $v$ of $B$ which sees
some large chromatic subgraph $B'$ of $B$ such that $B'$ is independent 
from $T\setminus v$. Now the set of neighbors $S'$ of $B'$ in $S$ is rich 
in $T$, and we thus have our shadow with the aforementioned property 
(still using $B,S$ instead of $B',S'$). Every pair of vertices $x,y$ of 
$S$ has then a type $t_{xy}$.

\begin{lemma}\label{lempierre}
Let $123$ be an induced path in $B$ having respective private neighbors $x,y,z$ in $S$.
Then $y$ cuts $xz$,  the types of $xy$ and $yz$ are the same, and
$U_{xz}$ has trinity 0 (and is not seen by $y$).
\end{lemma}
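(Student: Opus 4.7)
The plan is to establish the three conclusions jointly. The basic building blocks are the induced 3-edge paths $x12y$ and $y23z$ of trinity $0$ and the induced 4-edge path $x123z$ of trinity $1$, all valid because $\{x,y,z\}$ is a stable subset of $S$ with private neighbors along the induced path $1\text{-}2\text{-}3 \subseteq B$. Inducedness of any cycle formed by concatenating one of these with a path in $Y$ or an upper-level $U$-path is automatic: $Y \perp B$ kills chords between $Y$-internals and $\{1,2,3\}$, the level distances kill chords between $U$-internals and $N_{\ell+1}$, and privateness plus stability rule out direct chords between $\{x,y,z\}$ and $\{1,2,3\}$.

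For part (a), combining any induced $xz$-path $P$ in a block $Y_i$ with $x123z$ gives an induced cycle of trinity $p_{xz}+1$, so $p_{xz}\neq 2$ and $t_{xz}=1$. Assuming for contradiction that $y$ does not cut $xz$, many blocks admit an induced $y$-avoiding $xz$-path $P$; closing each such $P$ with an alternative $xz$-path of trinity $2$ that threads through $y$---namely $x P_{xy}^k y 23 z$ when $t_{xy}=2$, $x 1 2 y P_{yz}^l z$ when $t_{yz}=2$, $x P_{xy}^k y P_{yz}^l z$ when $(t_{xy},t_{yz})=(1,1)$, or a $U$-based variant otherwise---yields an induced cycle of trinity $p_{xz}+2$, forcing $p_{xz}=0$ in many blocks and contradicting the ``at most one trinity-$0$ path'' clause of rich sets. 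The non-cut conditions needed to keep the closure paths induced follow from the $C_5$-free intermezzo set up next.

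If $y$ and $z$ shared a common neighbor $w\in Y$, the 5-cycle $ywz32y$ would be induced (chords $yz$, $y3$, $w2$, $w3$, $z2$ killed respectively by stability, privateness, $Y\perp B$ twice, and privateness), contradicting $C_5$-freeness; by symmetry, no two of $\{x,y,z\}$ share a neighbor in $Y$. Fact~\ref{fact:asymmetricut}, combined with part (a), then forces $x$ not to cut $yz$ and $z$ not to cut $xy$. Picking $P_{xy}\in Y_i$ avoiding $N(z)$ and $P_{yz}\in Y_j$ avoiding $N(x)$ in distinct blocks, the induced cycle $xP_{xy}yP_{yz}z321x$ has trinity $t_{xy}+t_{yz}+1\ne 0$, which excludes $(t_{xy},t_{yz})=(1,1)$.

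The key remaining step is to show $y\not\sim U_{xz}$. If $v\in U_{xz}$ were a $y$-neighbor, let $v^-,v^+$ be the closest such to $x,z$ respectively. By $C_5$-freeness applied to $xv^-y21x$ and $zv^+y23z$, neither $v^-$ nor $v^+$ can be the endpoint-adjacent vertex of $U_{xz}$; the induced cycles $xU_{xv^-}v^-y21x$ and $zU_{v^+z}^{-1}v^+y23z$ then pin $u_{xv^-},u_{v^+z}\in\{0,1\}$, while the $P$-cycles $xP_{xy}yv^-U_{xv^-}^{-1}x$ and its $z$-side counterpart force $u_{xv^-}\equiv t_{xy}-1$ and $u_{v^+z}\equiv t_{yz}-1\pmod{3}$. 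The bypass cycle $xU_{xv^-}v^-yv^+U_{v^+z}z321x$ (induced since $U_{xz}$ is induced and $v^-,v^+$ are the extremal $y$-neighbors on it) has trinity $u_{xv^-}+u_{v^+z}\ne 0$; in the unique-$v$ sub-case, the identity $u_{xv^-}+u_{v^+z}=u_{xz}\in\{0,1\}$ further restricts the system so that it is inconsistent across every $(t_{xy},t_{yz})$ configuration, ruling out $v$. With $y\not\sim U_{xz}$ in hand, the cycles $xP_{xy}y23zU_{xz}^{-1}x$ (trinity $t_{xy}+u_{xz}$) and $xP_{xy}yP_{yz}zU_{xz}^{-1}x$ (trinity $t_{xy}+t_{yz}+u_{xz}$), combined with the earlier exclusion of $(1,1)$, force $(t_{xy},t_{yz})=(2,2)$ (so (b) holds) and $u_{xz}=0$ (so (c) holds). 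The main obstacle is the $y\not\sim U_{xz}$ step, which requires carefully combining cycle-trinity constraints across the unique and multi-$y$-neighbor sub-cases and exploiting $C_5$-freeness both at the endpoints and, via Fact~\ref{fact:asymmetricut}, for the non-cut conditions that keep the $P$-cycles induced.
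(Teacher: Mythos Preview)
Your argument for part (a) is circular. To build the ``alternative $xz$-path of trinity $2$ that threads through $y$'' you need, case by case, that $z$ does not cut $xy$ (when $t_{xy}=2$), that $x$ does not cut $yz$ (when $t_{yz}=2$), or both (when $(t_{xy},t_{yz})=(1,1)$). You assert these follow from ``the $C_5$-free intermezzo'', but that intermezzo only shows no two of $x,y,z$ share a neighbour in $Y$; feeding this into Fact~\ref{fact:asymmetricut} under your standing hypothesis that $y$ does \emph{not} cut $xz$ yields merely that \emph{at most one} of ``$x$ cuts $yz$'' and ``$z$ cuts $xy$'' holds --- the two instances of Fact~\ref{fact:asymmetricut} involving $y$ become vacuous. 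In particular, if $(t_{xy},t_{yz})=(1,1)$ and, say, $z$ cuts $xy$ while $x$ does not cut $yz$, none of your listed closures is available, and the unspecified ``$U$-based variant'' cannot be invoked because at this point you have not shown the third vertex misses the relevant $U$-path. When, two sentences later, you do obtain both non-cut conditions, you explicitly use ``Fact~\ref{fact:asymmetricut}, combined with part (a)'' --- precisely the statement still under proof.

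The paper dissolves all of this with an observation you never make: any common neighbour of two of $x,y,z$ in $N_{\ell-1}$ closes an induced $C_5$ (for $\{x,y\}$ or $\{y,z\}$, via $12$ or $23$) or an induced $C_6$ (for $\{x,z\}$, via $123$). Hence $x,y,z$ have private neighbours in $N_{\ell-1}$, and since those private neighbours are the \emph{only} internal vertices of $U_{xy},U_{yz},U_{xz}$ that a vertex of $N_\ell$ can see, one gets $z\not\sim U_{xy}$, $x\not\sim U_{yz}$, $y\not\sim U_{xz}$ in one stroke, \emph{before} any cut analysis. With the three $U$-paths freely available from the outset, the lemma reduces to reading off trinities from a handful of induced cycles built out of $U$-paths, the trinity-$0$ paths $x12y$, $y23z$, and rich-set paths; no delicate bookkeeping of who cuts whom is needed to get started. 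Incidentally, your own $C_5$ step (``neither $v^-$ nor $v^+$ can be the endpoint-adjacent vertex'') already proves $y\not\sim U_{xz}$ outright --- those endpoint-adjacent vertices in $N_{\ell-1}$ are the only internal vertices of $U_{xz}$ that $y$ could possibly see --- so the elaborate trinity case analysis that follows it is vacuous.
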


\begin{proof}
First observe that $\{x,y,z\}$ have private neighbors in $N_{\ell-1}$ to
avoid $C_5$ and $C_6$.
In particular $U_{xy}$, $U_{yz}$ and $U_{xz}$ are not respectively
seen by ${z,x,y}$. Since $u_{xy}$ and $u_{yz}$
are non zero to avoid $U_{xy}21$ and $U_{yz}32$, we have
$u_{xy}=t_{xy}$ and $u_{yz}=t_{yz}$. We denote by $P_{xy}$
the path $x12y$ and by $P_{yz}$ the path $y23z$.

Assume for contradiction that $y$ does not cut $xz$. We can form an induced cycle of 
length $u_{yz}+t_{zx}+p_{xy}=u_{yz}+t_{zx}$. Since $u_{yz}$ and 
$t_{zx}$ have trinity $1$ or $2$, they must have the same trinity.
We can also form an induced cycle of 
length $p_{yz}+t_{zx}+u_{xy}=u_{xy}+t_{zx}$, giving $u_{xy}=t_{zx}$.
Thus $xy$ and $yz$ have the same type. 
But then there is a cycle of trinity $u_{xy}+t_{yz}+t_{zx}=0$, which
gives that $x$ cuts $yz$. Thus $z$ does not cut $xy$, and we reach 
a contradiction using an induced cycle of trinity $t_{xy}+u_{yz}+t_{zx}=0$.

So $y$ cuts $xz$, and in particular $z$ does not cut $xy$ and 
$x$ does not cut $yz$. There are induced cycles of trinity 
$t_{xy}+t_{yz}+u_{zx}$, $t_{xy}+p_{yz}+u_{zx}=t_{xy}+u_{zx}$ and 
$p_{xy}+t_{yz}+u_{zx}=t_{yz}+u_{zx}$. Assume for contradiction 
that $u_{zx}\neq 0$. Since $t_{xy}$ and $t_{yz}$ are also non zero,
we must have $t_{xy}=t_{yz}=u_{zx}$, giving a contradiction.
Thus $u_{zx}=0$, and therefore $t_{xy}=t_{yz}$.

\end{proof}

\begin{corollary} \label{P4private}
An induced path $1234$ in the shadow $B$ cannot have private 
neighbors $x,y,z,t$ in $S$.
\end{corollary}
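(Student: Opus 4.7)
The plan is to apply Lemma~\ref{lempierre} to each of the two overlapping induced $P_3$s inside $1234$ and combine the outputs into a forbidden trinity-$0$ induced cycle. Applied to $123$ with privates $x,y,z$, the lemma gives an induced $xz$-path $U_{xz}$ of trinity $0$ whose interior avoids $y$, together with $y\mid xz$ and $t_{xy}=t_{yz}$. Symmetrically, applied to $234$ with privates $y,z,t$, we obtain an induced $yt$-path $U_{yt}$ of trinity $0$ avoiding $z$, $z\mid yt$, and $t_{yz}=t_{zt}$. The $B$-paths $x\,1\,2\,3\,4\,t$, $x\,1\,2\,3\,z$, $y\,2\,3\,4\,t$ are induced by the private-neighbor structure and the fact that $1234$ is induced, of trinities $2,1,1$ respectively; closing each with its rich counterpart and invoking Fact~\ref{fact:richtype} yields $t_{xt}=2$ and $t_{xz}=t_{yt}=1$. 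Consequently, by Fact~\ref{fact:richtype} applied to $x,t$, any induced $xt$-path with interior in the upper levels has trinity in $\{0,2\}$.

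The natural candidate cycle $x\,U_{xz}\,z\,3\,4\,t\,U_{ty}\,y\,2\,1\,x$ of length $\equiv 0\pmod 3$ is broken by the $23$-chord of the induced $P_4$, so I instead fix an up-path $U_{tx}$ of some trinity $u\in\{0,2\}$ and use the two cycles
\[
C_2:\;x\,U_{xz}\,z\,3\,4\,t\,U_{tx}\,x\qquad\text{and}\qquad
C_4:\;x\,1\,2\,y\,R_{yt}\,t\,U_{tx}\,x,
\]
where $R_{yt}$ is a rich $yt$-path through a typical block (of trinity $t_{yt}=1$). Their lengths modulo $3$ are $u$ and $u+1$; hence if both are induced, $u\neq 0$ and $u\neq 2$, contradicting $u\in\{0,2\}$ and finishing the proof.

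The main obstacle is verifying that $C_2$ and $C_4$ are induced. The level gap between $N_{\ell+1}$ and $N_{\ell-1}$, the stability of $S$, the private-neighbor structure of $1234$, and the non-adjacencies granted by Lemma~\ref{lempierre} ($y$ not on $U_{xz}$, $z$ not on $U_{yt}$) rule out every automatic chord, leaving only the potential adjacencies $t$-to-$U_{xz}$ and $z$-to-$U_{tx}$ in $C_2$, and $x$-to-$R_{yt}$ and $y$-to-$U_{tx}$ in $C_4$. The interiors of $R_{yt}$ and $U_{tx}$ lie in $N_{\ell+1}$ and $N_{\leq\ell-1}$ respectively and so are automatically non-adjacent. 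Each residual chord would permit a local rerouting of the cycle through the offending vertex, yielding a shorter cycle whose trinity can be computed by the same modular-arithmetic technique used in the proof of Lemma~\ref{lempierre}; in every case one of these shorter cycles is trinity $0$ and induced, again contradicting the trinity assumption. If $x$ cuts $yt$, so that $R_{yt}$ cannot be chosen outside $N(x)$, Fact~\ref{fact:asymmetricut} applied to $\{x,y,t\}$ forces $t_{xy}=2$ and a symmetric cycle with the roles of $(x,z)$ and $(t,y)$ swapped handles that case. This local bookkeeping, together with the initial double application of Lemma~\ref{lempierre}, completes the proof.
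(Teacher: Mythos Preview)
Your overall plan --- apply Lemma~\ref{lempierre} twice, pin down $t_{xt}=2$ and $t_{yt}=1$, deduce $u_{tx}\in\{0,2\}$, and then exhibit two induced cycles of trinities $u_{tx}$ and $u_{tx}+1$ --- is sound in spirit, and your cycle $C_4$ is in fact exactly the cycle $p_{xy}+t_{yt}+u_{tx}$ that the paper uses at the very end. The difficulty is entirely in your cycle $C_2$, and there the argument has a genuine gap.

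The cycle $C_2=x\,U_{xz}\,z\,3\,4\,t\,U_{tx}\,x$ contains \emph{two} up-paths, $U_{xz}$ and $U_{tx}$, both attached at $x$ and both with interior in the levels $N_{<\ell}$. Nothing in the setup prevents these interiors from sharing a vertex or from being adjacent; your list of ``residual chords'' ($t$-to-$U_{xz}$, $z$-to-$U_{tx}$) omits precisely this interaction, and the level-gap and privacy arguments you invoke say nothing about it. If the two up-paths touch, $C_2$ is not induced, and extracting an induced $zt$-walk from $U_{xz}\cup U_{tx}$ destroys the trinity count you need. The paper avoids this issue completely by never combining two up-paths: it obtains the trinity-$u_{tx}$ cycle as $T_{xy}+T_{yz}+T_{zt}+U_{tx}$, using three rich paths in pairwise independent blocks together with a \emph{single} $U_{tx}$, after first checking that none of $x,y,z,t$ cuts the relevant pairs (this is where the $C_5$-freeness is used, via Fact~\ref{fact:asymmetricut} and a short common-neighbour argument).

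There is a second gap in your treatment of the case ``$x$ cuts $yt$''. Fact~\ref{fact:asymmetricut} needs \emph{two} mutual cuts among $\{x,y,t\}$ to force type~$2$ for some pair; you only have the single hypothesis $x\mid yt$, so you cannot conclude $t_{xy}=2$ from it. The paper instead argues directly: in a typical block, a shortest $yt$-path is hit by both $z$ (since $z\mid yt$) and $x$; because $y\mid xz$, the resulting $xz$-subpath must be met by $y$, which is only possible at $y$'s unique neighbour on the path, producing a common neighbour and hence a $C_5$. That is the missing mechanism, and once you have it your $C_4$ finishes the proof exactly as the paper does.
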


\begin{proof}
Since $y$ cuts $xz$ and $z$ cuts $yt$, 
$y$ does not cut $zt$ and $z$ does not cut $xy$.
Assume for contradiction that $x$ cuts $yt$ or $zt$. Consider 
an induced path $T_{yt}$ (of trinity $t_{yt}=1$) in a typical block. 
It is seen 
by $z$, and then by $x$. But then, since $y$ also cuts $xz$,
this is only possible if $y$ has a common neighbor on
$T_{yt}$ with $x$ or $z$, giving a $C_5$ and a contradiction.

We then have an induced cycle of trinity $t_{xy}+t_{yz}+t_{zt}+u_{tx}$.
Since $t_{xy}=t_{yz}=t_{zt}$, we have $u_{tx}\neq 0$, and therefore
$u_{tx}=t_{tx}=2$ because of the path $x1234t$. We reach a contradiction
since there exists a cycle with trinity $p_{xy}+t_{yt}+u_{tx}=0$.
\end{proof}

\begin{corollary} \label{P5triple}
A vertex $y$ of $S$ cannot see $1,3,5$ in some induced path $12345$ in the shadow $B$.
\end{corollary}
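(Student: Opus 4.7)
The plan is to exhibit a forbidden induced $6$-cycle directly, with no need for the type/cut machinery used in the earlier corollaries of this section. Suppose for contradiction that $y \in S$ is adjacent to all three of $1, 3, 5$ along some induced path $12345$ in the shadow $B$. I first pin down which edges between $y$ and the path can exist, and then close a cycle of the wrong trinity.

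First I would observe that since $G$ is a trinity graph it contains no triangle (a triangle is a trinity $0$ cycle of length $3$). In particular, $y$ cannot be adjacent to $2$, for $y12$ would be a triangle, and $y$ cannot be adjacent to $4$, for $y45$ would be a triangle. Combined with the assumption that $y$ is adjacent to $1,3,5$ and that $12345$ is induced, this completely determines the subgraph induced on $\{y,1,2,3,4,5\}$ apart from the three chords $y1, y3, y5$.

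Next, I would consider the cycle $y\,1\,2\,3\,4\,5\,y$. By the previous paragraph, $y$ has no chord to $2$ or $4$, and $12345$ has no chord by assumption; so this $6$-cycle is induced. But an induced $6$-cycle has trinity $0$, contradicting the hypothesis that $G$ is a trinity graph.

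There is essentially no obstacle here: the statement is a one-line consequence of the fact that trinity graphs forbid both triangles (trinity $0$ length-$3$ cycles) and induced $C_{6}$'s (trinity $0$ length-$6$ cycles). It is stated as a corollary because it will be convenient in the forthcoming arguments about private neighbors of long induced paths in $B$, where a vertex of $S$ with three non-consecutive neighbors in an induced path would otherwise be hard to rule out locally.
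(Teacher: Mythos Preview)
Your argument has a genuine gap: the $6$-cycle $y12345y$ is \emph{not} induced. By the very hypothesis you are trying to contradict, $y$ is adjacent to $3$, and $y3$ is a chord of that cycle. You correctly rule out the chords $y2$ and $y4$ via triangle-freeness, but you overlook $y3$. With that chord present, the configuration on $\{y,1,2,3,4,5\}$ splits into two induced $4$-cycles $y123$ and $y345$, each of trinity $1$; nothing forbidden arises locally, and your contradiction evaporates.

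This is precisely why the paper's proof is not a one-liner and genuinely uses the shadow structure. Since $S$ dominates $B$, the vertices $2$ and $4$ have neighbors in $S$; if some $w\in S$ saw both $2$ and $4$, then $y12w43$ would be an induced $C_6$ (using that $S$ is stable and triangle-freeness), so $2$ and $4$ have private neighbors $x,z$ in $S$. Now $x,y,z$ are private neighbors of the induced path $234$, and Lemma~\ref{lempierre} gives that the up-path $U_{xz}$ has trinity $0$ and is not seen by $y$. Closing $U_{xz}$ with the length-$6$ path $z45y12x$ produces a trinity $0$ induced cycle, the desired contradiction. The key point is that one must leave the path $12345$ and use the richness/level structure through Lemma~\ref{lempierre}; a purely local argument on $\{y,1,2,3,4,5\}$ cannot work.
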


\begin{proof}
To avoid $C_6$, the vertices $2,4$ must have private neighbors $x,z$
in $S$.
Applying Lemma~\ref{lempierre} to $xyz$ and $234$,
we have that $U_{xz}$ has trinity 0 and does not see $y$. Extend it 
with $z45y12x$ to get a trinity 0 cycle, and a contradiction.
\end{proof}

\begin{corollary}\label{P5three}
An induced path $12345$ in the shadow $B$ cannot have
neighbors $x,y,z$ in $S$ linked respectively to $\{1,3\}$, $\{2,4\}$
and $5$.
\end{corollary}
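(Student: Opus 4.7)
The plan is to apply Lemma~\ref{lempierre} to the induced subpath $345$ of $12345$, with respective private neighbors $x,y,z$ in $S$. The hypothesis is immediate: among $\{x,y,z\}$, only $x$ sees $3$ (because $y$'s neighbors in the path are $\{2,4\}$ and $z$'s is only $\{5\}$), only $y$ sees $4$, and only $z$ sees $5$. The lemma then yields an induced up-path $U_{xz}$ of trinity $0$ whose interior is not seen by $y$.

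Next I would close $U_{xz}$ by the path $z54y21x$ to form a cycle $C$. Its length is $u_{xz}+6\equiv 0\pmod{3}$, so if $C$ is induced it is a trinity-$0$ cycle, contradicting the fact that $G$ is a trinity graph. The six closing edges $z5, 54, 4y, y2, 21, 1x$ all exist by hypothesis. For the non-edges required to make $C$ induced: $x$ avoids $\{2,4,5\}$ (its path-neighbors are $\{1,3\}$), $z$ avoids $\{1,2,4\}$, $y$ avoids $\{1,5\}$, and $x,y,z$ form a stable set in $S$; the subpath $12345$ is itself induced; and the interior of $U_{xz}$ avoids $x,z$ by the definition of an induced path, avoids $y$ by the conclusion of Lemma~\ref{lempierre}, and avoids $\{1,2,3,4,5\}\subseteq N_{\ell+1}$ by a BFS-level gap, since interior vertices of $U_{xz}$ lie in levels at most $\ell-1$.

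The only real decision in the plan is choosing the subpath $345$ rather than $123$ or $234$: the doubled adjacencies $x\sim\{1,3\}$ and $y\sim\{2,4\}$ prevent $\{x,y,z\}$ from serving as respective private neighbors of those other subpaths, whereas on $345$ each path-vertex is hit by a distinct element of $\{x,y,z\}$. The delicate ingredient is the clause ``$U_{xz}$ is not seen by $y$'' provided by Lemma~\ref{lempierre}: since $y\in N_\ell$ could a priori be adjacent to an interior vertex of $U_{xz}$ lying in $N_{\ell-1}$, without that clause there would be no way to guarantee that $C$ is induced and the whole argument would collapse.
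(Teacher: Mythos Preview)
Your proof is correct and follows exactly the same route as the paper: apply Lemma~\ref{lempierre} to the subpath $345$ with private neighbors $x,y,z$ to obtain $u_{xz}=0$ with $y$ not seeing $U_{xz}$, then close via $z54y21x$ to get a trinity-$0$ induced cycle. Your write-up is simply more explicit about verifying that the resulting cycle is induced.
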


\begin{proof}
Since $3,4,5$ have private neighbors $x,y,z$, by Lemma~\ref{lempierre}
we have $u_{xz}=0$. But then, we can close a trinity 0 cycle
using $z54y21x$, a contradiction.
\end{proof}

\begin{corollary} \label{P7two}
An induced path $1234567$ in the shadow $B$ cannot have
neighbors $x,y$ in $S$ linked respectively to $\{1,3\}$ and $\{2,4\}$.
\end{corollary}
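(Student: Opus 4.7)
The plan is to consider successive dominators in $S$ of the vertices $5$, $6$, and $7$, use the earlier corollaries together with the $C_5$-free hypothesis to pin down the $V(P)$-neighborhood of each, and ultimately reach a contradiction when no valid dominator of $7$ remains.

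First I would fix any dominator $z \in S$ of vertex $5$. Since $x, y \not\sim 5$ and $S$ is stable, $z \ne x, y$ and is non-adjacent to both; triangle-freeness with $5$ gives $z \not\sim 4, 6$. Corollary~\ref{P5three} applied to $12345$ with $(x, y, z)$ excludes the configuration $N(z) \cap \{1,\ldots,5\} = \{5\}$, so $z$ has another neighbor in $\{1,2,3\}$. A brief case check then pins down $z \sim 3$: if $z \sim 2$ then $z \not\sim 3$ by the triangle on the edge $23$, producing an induced $C_5$ on $z,2,3,4,5$; and if $z \sim 1$, then either $z \sim 3$ triggers the forbidden $\{1,3,5\}$-pattern of Corollary~\ref{P5triple}, or else $z,1,2,3,4,5$ is an induced $C_6$ of trinity $0$. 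Corollary~\ref{P5triple} applied to $34567$ also excludes $z \sim 7$, so $N(z) \cap V(P) = \{3,5\}$.

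The same recipe, with $(y, z, z_6)$ fed into Corollary~\ref{P5three} on $23456$, governs any dominator $z_6 \in S$ of $6$: triangle-freeness with $6$ gives $z_6 \not\sim 5, 7$, and the corollary forces $z_6$ to have a neighbor in $\{2,3,4\}$. The case $z_6 \sim 3$ produces an induced $C_5$ on $z_6, 3, 4, 5, 6$ (using $z_6 \not\sim 4$ by the triangle on the edge $34$); $z_6 \sim 2$ produces either a triangle through $23$ (if $z_6 \sim 3$), the forbidden $\{2,4,6\}$-pattern of Corollary~\ref{P5triple} on $23456$ (if $z_6 \sim 4$), or an induced $C_6$ of trinity $0$ on $z_6, 2, 3, 4, 5, 6$ (otherwise); leaving $z_6 \sim 4$ with $z_6 \not\sim 2, 3$, and one final check that $z_6 \not\sim 1$ (else $z_6, 4, 3, 2, 1$ is an induced $C_5$). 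Hence $N(z_6) \cap V(P) = \{4,6\}$.

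The closing blow is to let $z_7 \in S$ dominate $7$; triangle-freeness gives $z_7 \not\sim 6$. Corollary~\ref{P5three} on $34567$ with triple $(z, z_6, z_7)$ now forces $z_7$ to have a neighbor in $\{3,4,5\}$, but every subcase fails: $z_7 \sim 5$ makes $z_7$ a dominator of $5$, so by the first step its $V(P)$-neighborhood would be $\{3, 5\}$, contradicting $z_7 \sim 7$ (alternatively, $\{3,5,7\}$ violates Corollary~\ref{P5triple}); $z_7 \sim 4$ gives an induced $C_5$ on $z_7, 4, 5, 6, 7$; and $z_7 \sim 3$ gives an induced $C_6$ of trinity $0$ on $z_7, 3, 4, 5, 6, 7$. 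This final contradiction closes the proof. The main obstacle I anticipate is the case bookkeeping in the second step: one must carefully track which adjacencies have already been forbidden in earlier subcases in order to be sure each invoked cycle really is induced, and to match the $\{1,3\}, \{2,4\}, \{5\}$-roles of Corollary~\ref{P5three} correctly against the current triple and the current $P_5$.
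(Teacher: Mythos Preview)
Your proof is correct and follows essentially the same strategy as the paper's: take successive dominators in $S$ of $5$, $6$, $7$, and use Corollary~\ref{P5three} (shifted along the path) together with triangle-freeness, $C_5$-freeness and Corollary~\ref{P5triple} to force the pattern $\{3,5\}$, $\{4,6\}$, and then derive a contradiction at vertex $7$. The paper is much terser---it writes only ``by Corollary~\ref{P5three}, $z$ must see $3$'' and leaves the short elimination of $1,2,4$ implicit---and in the last step it phrases the contradiction as ``a neighbor $u\in S$ of $5,7$ \dots\ contradicts Corollary~\ref{P5three} applied to $12345$ and $x,y,u$'', whereas you branch on which of $3,4,5$ the dominator of $7$ sees and kill each case directly; these are minor stylistic differences, not a genuinely different route.
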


\begin{proof}
Let $z\in S$ be a neighbor of $5$. By Corollary~\ref{P5three},
$z$ must see 3. By the same argument, there is a neighbor 
$t\in S$ of 6 which sees 4, and then a neighbor $u\in S$ 
of $5,7$. This contradicts Corollary~\ref{P5three} applied
to $12345$ and $x,y,u$.
\end{proof}

\begin{theorem} \label{noshadowC5}
There is no shadow in a $C_5$-free trinity graph.
\end{theorem}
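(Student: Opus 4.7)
The approach is to assume for contradiction that a shadow exists in a $C_5$-free trinity graph $G$ and then use the four corollaries proven above to reach a contradiction. Let $B \subseteq N_{\ell+1}(r)$ be a set of large chromatic number dominated by a rich stable set $S \subseteq N_\ell(r)$. Since $G$ is trinity-free, $B$ is triangle-free, and by assumption $B$ is also $C_5$-free. The first step is to extract a long induced path in $B$: by the classical theorem of Gy\'arf\'as, a triangle-free graph with arbitrarily large chromatic number contains induced paths of any desired length, so $B$ contains an induced path $P = v_1 v_2 \cdots v_k$ with $k$ as large as we please.

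Next I would analyze the pattern of $S$-neighborhoods along $P$. Writing $I(s) = \{i : s \sim v_i\}$ for $s \in S$, triangle-freeness forces $I(s)$ to contain no two consecutive integers; Corollary~\ref{P5triple} forbids $\{i, i+2, i+4\} \subseteq I(s)$, so two gaps of size $2$ cannot occur consecutively in $I(s)$; and since $S$ dominates $B$, every $v_i$ has at least one neighbor in $S$. These constraints already severely restrict the distribution of each $s$'s neighbors along $P$, but they are not yet enough for a direct contradiction.

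The plan is then to combine these with Corollaries~\ref{P4private}, \ref{P5three}, and \ref{P7two}. Pick, for each $v_i$, a representative $s_i \in S$ adjacent to it; by Corollary~\ref{P4private}, no induced $4$-window on $P$ admits four distinct ``private'' neighbors in $S$, so some local overlap among the $s_i$'s within each $4$-window is forced. Tracking this overlap along $P$, using Corollaries~\ref{P5triple} and \ref{P5three} to eliminate the simplest overlap configurations, should ultimately force the appearance, inside some $7$-window, of two vertices $x, y \in S$ with $N(x) \cap V(P_7) = \{v_1, v_3\}$ and $N(y) \cap V(P_7) = \{v_2, v_4\}$, contradicting Corollary~\ref{P7two}.

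The main obstacle is the bookkeeping in the case analysis: a single $s \in S$ may have multiple neighbors on $P$, so establishing the ``private'' configuration required by Corollaries~\ref{P4private} and \ref{P7two} within a specified window is delicate. One must carefully chain the applications of the corollaries, possibly passing to subpaths on which the structure simplifies, and pigeonhole on $\{s \in S : s \sim v_i\}$ when a vertex has many $S$-neighbors, in order to isolate the offending pair inside the chosen $P_7$.
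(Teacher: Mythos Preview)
Your plan matches the paper's: extract a long induced path in $B$ and chain Corollaries~\ref{P4private}, \ref{P5triple}, \ref{P5three}, \ref{P7two} on the $S$-neighbors along it until a contradiction. But what you have written is an outline, not a proof: the entire content of the theorem is the ``bookkeeping'' you explicitly defer, and phrases like ``should ultimately force'', ``possibly passing to subpaths'', and ``pigeonhole on $\{s\in S: s\sim v_i\}$'' mark exactly where the work lies. You give no indication of how long a path suffices or which sequence of deductions actually closes the argument.

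For comparison, the paper carries this out on a single induced path $0123456789$, with no pigeonhole and no arbitrarily long $P$. Corollary~\ref{P4private} on $0123$ yields (up to symmetry) some $x\in S$ seeing $1,3$; Corollary~\ref{P7two} on $1234567$ then forbids a common $S$-neighbor of $2,4$, so choose $y,z\in S$ seeing $2$ and $4$ respectively on $1234$. A neighbor $t\in S$ of $5$ must also see $3$, else $y,x,z,t$ are private for $2345$ against Corollary~\ref{P4private}; Corollary~\ref{P7two} on $3456789$ now forbids $z$ from seeing $6$. Pick $u,v\in S$ seeing $6,7$; Corollary~\ref{P5triple} prevents $t$ (which sees $3,5$) from seeing $7$, so to avoid $z,t,u,v$ being private for $4567$, the vertex $v$ must see $5$. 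But then $x,z,v,u$ are private for $3456$, contradicting Corollary~\ref{P4private}. Note that the final blow comes from Corollary~\ref{P4private}, not from forcing the configuration of Corollary~\ref{P7two} as you anticipated; the latter is used only as an intermediate exclusion. Your proposal becomes a proof once you supply such a concrete chase.
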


\begin{proof}
Since
$B$ has large chromatic number, it contains an induced path 
$0123456789$. By Corollary~\ref{P4private}, the vertices $0,2$ have a common 
neighbor in $S$ or $1,3$ have a common neighbor in $S$. 
Without loss of generality, let us assume that $x$ sees $1,3$.
By Corollary~\ref{P7two} there is $y,z\in S$ respectively 
seeing $2$ and $4$ on $1234$. Let $t\in S$ be a neighbor of $5$.
To avoid the private neighbors $y,x,z,t$ of $2345$ (a contradiction to Corollary~\ref{P4private}), the vertex
$t$ sees 3. By Corollary~\ref{P7two} applied to $3456789$,
we cannot have that $z$ sees $6$. Let $u \in S$ be a neighbor of $6$, and $v$ of $7$.
Note that $t$ does not see 7 by Corollary~\ref{P5triple}.
To avoid the private neighbors $z,t,u,v$ of $4567$ and a contradiction to Corollary~\ref{P4private}, the vertex 
$v$ sees 5. We get four private neighbors
$x,z,v,u$ of $3456$, a contradiction to Corollary~\ref{P4private}.
\end{proof}

\section{Trinity graphs have bounded chromatic number}

We now know that the original problem is equivalent 
to proving that antishadows do not exist in $C_5$-free case. 
Let us derive some results about antishadows in $C_5$-free
trinity graphs.

\begin{lemma}\label{antiP4}
Let $G$ be a $C_5$-free trinity graph with some antishadow $X_0,X_1,X_2,X_3,X_4$
where $X_0$ belongs to $N_{\ell}$. If $1234$ is an induced path in $X_0$
and $1,3$ have private neighbors $x,z$ in $X_1$, then $2,4$ do not have
a common neighbor in $X_1$.
\end{lemma}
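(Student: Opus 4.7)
The plan is, assuming a common neighbor $y \in X_1$ of $2$ and $4$ exists, to construct an induced cycle of trinity $0$. I would start by recording the forced non-edges: triangle-freeness gives $y \not\sim 1,3$, $x \not\sim 2$, and $z \not\sim 2,4$, while $C_5$-freeness applied to $x1234x$ rules out $x \sim 4$ (and symmetrically $z \sim 1$, which is already excluded by the privacy of $z$). Since $X_1$ is stable, $\{x,y,z\}$ is an independent triple whose respective neighborhoods in $\{1,2,3,4\}$ are $\{1\}$, $\{2,4\}$ and $\{3\}$.

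I would next show that no two of $\{x,y,z\}$ share a neighbor in $X_2$. A common neighbor $w \in X_2$ of $x$ and $y$ would close into an induced $C_5 = wx12yw$ (every non-path pair is at BFS distance $\geq 2$, hence no chord), and similarly a common neighbor of $y,z$ would give an induced $C_5$ via $wy23zw$; a common neighbor of $x,z$ would close into an induced $C_6 = wx123zw$, i.e.\ a trinity~$0$ cycle. Combined with Fact~\ref{fact:asymmetricut}, at most one of $\{x,y,z\}$ cuts the opposite pair in $X_2$. Closing the induced path $x123z$ of trinity $1$ with any $T_{xz}$ yields an induced cycle of trinity $1+t_{xz}$, forcing $t_{xz}=1$.

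When nobody cuts, the induced cycle $x\,T_{xy}\,y\,2\,3\,z\,T_{zx}\,x$, with $T_{xy}$ avoiding $z$ and $T_{zx}$ avoiding $y$ and taken in distinct blocks, has trinity $t_{xy}+1$, so $t_{xy}=1$; the symmetric cycle $z\,T_{zy}\,y\,2\,1\,x\,T_{xz}\,z$ gives $t_{yz}=1$. Placing $T_{xy},T_{yz},T_{xz}$ in three pairwise distinct blocks then makes the concatenated $T$-triangle induced of trinity $1+1+1 \equiv 0$, a contradiction.

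The main obstacle is the remaining case where exactly one of $\{x,y,z\}$ cuts. Say $x$ cuts $yz$ (the case $z$ cuts $xy$ is symmetric, and the $y$-cuts-$xz$ case additionally exploits both edges $y2$ and $y4$). In each typical block $Y_a$ fix an induced $yz$-path $T^a$ and let $v_a^z$ be its $x$-neighbor closest to $z$ (and $v_a^y$ closest to $y$); by the no-common-neighbor step $v_a^z \not\sim z$, so $d_a^z := d(v_a^z,z) \ge 2$. The induced cycle $x\,v_a^z\,T^a[v_a^z \to z]\,z\,3\,2\,1\,x$ has trinity $d_a^z+2$, forcing $d_a^z \not\equiv 1 \pmod 3$, and the cycles $x\,v_a^y\,T^a[v_a^y \to y]\,y\,2\,1\,x$ and $x\,v_a^y\,T^a[v_a^y \to y]\,y\,2\,3\,z\,T_{zx}\,x$ (using $t_{xz}=1$) together force $d_a^y \equiv 0 \pmod 3$. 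The cross-block cycle $v_a^z\,x\,v_b^z\,T^b[v_b^z \to z]\,z\,T^a[z \to v_a^z]\,v_a^z$, induced because different blocks are independent and $v^z$ is closest to $z$, has trinity $d_a^z+d_b^z+2$, so $(d_a^z,d_b^z)\not\equiv(2,2)\pmod 3$ and hence $d_a^z \equiv 0 \pmod 3$ in all but at most one typical block. In any block with $v_a^y = v_a^z$ (a single $x$-neighbor on $T^a$), we then get $|T^a| = d_a^y + d_a^z \equiv 0 \pmod 3$, contradicting $|T^a| \equiv t_{yz} \in \{1,2\}$. The multi-$x$-neighbor case is handled analogously, by including the cycles $x\,v_i\,T^a[v_i \to v_{i+1}]\,v_{i+1}\,x$ on consecutive $x$-neighbors together with $C_5$-freeness (which forbids gaps of length~$3$), whose residue constraints combined with $|T^a| \equiv t_{yz}$ yield the required trinity~$0$ induced cycle.
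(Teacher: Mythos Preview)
Your approach is genuinely different from the paper's. The paper goes \emph{up}: it picks neighbors $1',3'\in N_{\ell-1}$ of $1,3$, analyzes the trinity of an up-path $U_{1'3'}$, and in two short steps reaches a contradiction to Fact~\ref{fact:richtype} using paths through $N_{\le \ell-1}$ (this is where vertex $4$ is actually used, via the path $x'x11'U_{1'3'}3'34yy'$). You instead try to work entirely \emph{down}, using only the richness of $X_1$ in $X_2$.

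Your argument is correct and complete through the ``nobody cuts'' case and through the single-$x$-neighbor subcase of ``$x$ cuts $yz$''. The gap is the multi-$x$-neighbor subcase, which you dismiss in one sentence. The residue constraints you list there do not force a trinity-$0$ cycle. Concretely: from your analysis one gets $d_a^y\equiv 0$, $d_a^z\equiv 0$ (for all but one block), and each internal gap $g_i\not\equiv 1$, $g_i\neq 3$; hence $g_i\equiv 0$ or $2$. But then $|T^a|\equiv\sum g_i\equiv 2m$ with $m$ the number of gaps of residue $2$, and $m\equiv 1$ (so $t_{yz}=2$) is perfectly consistent. For instance $d_a^y=3$, a single gap $g_1=2$, $d_a^z=3$ gives $|T^a|=8\equiv 2$; one checks that every induced cycle you can build from $x,y,z,1,2,3,4$, the segments $T^a[v_a^y\!\to\! y]$, $T^a[v_a^z\!\to\! z]$, the gap-cycle $x v_1\dots v_2 x$, cross-block cycles through $v_\bullet^{y}$ or $v_\bullet^{z}$, and $T_{zx},T_{xy}$ in other blocks, has trinity $1$ or $2$. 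The point is that once $x$ has several neighbors on $T^a$, you can only enter $T^a$ from $x$ at the two extreme neighbors $v_a^y,v_a^z$ without creating a chord, so cross-block cycles never ``see'' the internal gaps, and the single-block gap cycles $x v_i\dots v_{i+1} x$ only pin each $g_i$ to $\{0,2\}\pmod 3$. No combination of these closes. The $y$-cuts-$xz$ and $z$-cuts-$xy$ cases you also leave as sketches; they face the same obstruction.

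In short, the ingredient you are missing is precisely the one the paper supplies: an extra induced $xz$- (or $x'y'$-) path of the \emph{other} trinity, obtained by going through $N_{\le\ell-1}$ via $U_{1'3'}$. Without it, the constraints available purely from richness in $X_2$ are not enough to finish the cutting case.
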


\begin{proof}
Assume for contradiction that $2,4$ have a neighbor $y\in X_{1}$.
Let $1'$ be a neighbor of $1$ in $N_{\ell-1}$ and $3'$ be  a neighbor of 
$3$ in $N_{\ell-1}$ (we can have $1'=3'$, otherwise we assume that
$1',3'$ are private neighbors of $1,3$). The trinity $u_{1'3'}$ 
of a path $U_{1'3'}$ cannot be 2 since we could close a cycle
by $1'1233'$.
If $u_{1'3'}$  is 1, we have the two paths $x11'U_{1'3'}3'3z$ and $x123z$ 
with respective trinities 2 and 1, a contradiction to Fact~\ref{fact:richtype} in $X_{2}$. Thus $u_{1'3'}=0$.
We consider two (distinct) neighbors $x'$ and $y'$ of $x$ and $y$
in $X_{2}$. Note that the two paths $x'x11'U_{1'3'}3'34yy'$ and $x'x12yy'$ 
have respective trinities 1 and 2, a contradiction to Fact~\ref{fact:richtype} in $X_{3}$. 
\end{proof}

\begin{corollary} \label{antiP5triple}
A vertex $y$ in $X_1$ cannot see $1,3,5$ in some induced path $12345$ of $X_0$.
\end{corollary}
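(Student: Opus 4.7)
The plan is to derive a contradiction by exhibiting an induced $C_6$ (which has trinity $0$), invoking Lemma~\ref{antiP4} to locate the right common neighbor.

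Assume for contradiction that $y \in X_1$ is adjacent to $1,3,5$ in some induced path $12345$ of $X_0$. First, $y$ cannot be adjacent to $2$ or $4$: otherwise a triangle arises (for instance $y12$ if $y$ sees $2$, or $y34$ if $y$ sees $4$). So $y$ is a common neighbor of $3$ and $5$ in $X_1$, and the contrapositive of Lemma~\ref{antiP4} applied to the induced sub-path $2345$ of $X_0$ tells us that the pair $\{2,4\}$ does not have private neighbors in $X_1$: either every $X_1$-neighbor of $2$ is also adjacent to $4$, or every $X_1$-neighbor of $4$ is also adjacent to $2$. Since $X_1$ dominates $X_0$, each of $2$ and $4$ has at least one $X_1$-neighbor, so in either case we can pick $v \in X_1$ adjacent to both $2$ and $4$.

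I then claim that $v\,2\,1\,y\,5\,4\,v$ is an induced $6$-cycle, hence of trinity $0$, contradicting the hypothesis that $G$ is a trinity graph. All six cycle edges are present by construction; the chord $(v,y)$ is excluded because $X_1$ is a stable set; the chords $(v,1), (v,5)$, and $(y,4)$ would each form a triangle ($v12$, $v45$, and $y45$, respectively); and the remaining non-consecutive pairs $(2,y), (2,5), (2,4), (1,5), (1,4)$ are non-edges since $12345$ is induced in $X_0$ and $y$ is not adjacent to $2$ or $4$.

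The only real subtlety is spotting the correct invocation of Lemma~\ref{antiP4}: it must be applied to the tail $2345$ of the given path rather than to $1234$, so that the inner pair $\{3,5\}$ is the one having $y$ as a forbidden common neighbor. After that, the argument reduces to a single $C_6$ chord check, mirroring the shadow-side Corollary~\ref{P5triple} but bypassing the need for any analogue of Lemma~\ref{lempierre} or the deeper richness structure of the antishadow.
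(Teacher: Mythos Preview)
Your proof is correct and follows essentially the same approach as the paper: take the contrapositive of Lemma~\ref{antiP4} (applied to the sub-path $2345$, using that $y$ is a common $X_1$-neighbor of $3$ and $5$) to produce a vertex $v\in X_1$ adjacent to both $2$ and $4$, then exhibit the induced $6$-cycle $v21y54$. Your write-up is simply more explicit than the paper's two-line version, in particular spelling out why the failure of private neighbors for $\{2,4\}$ together with the domination $X_1\supseteq X_0$ yields an actual common neighbor, and verifying all nine chords of the $C_6$.
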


\begin{proof}
Applying Lemma~\ref{antiP4}, we deduce that there exists $x$
in $X_1$ dominating $2,4$. This gives the 6-cycle $y54x21$, and a contradiction.
\end{proof}

\begin{corollary}\label{nottwo}
If $1234567$ is an induced path (or cycle) in $X_0$ then
$1,3$ do not have common neighbors in $X_1$.
\end{corollary}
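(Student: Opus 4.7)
The plan is to assume for contradiction that some $x\in X_1$ is a common neighbor of $1$ and $3$ and derive a contradiction in three stages.

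Stage~1 pins down $x$'s behavior on the path. Triangle-freeness forbids $x\sim 2$ and $x\sim 4$; Corollary~\ref{antiP5triple} applied to $12345$ forbids $x\sim 5$; $C_5$-freeness via $x,3,4,5,6$ forbids $x\sim 6$; and trinity-freeness via the $6$-cycle $x,3,4,5,6,7$ forbids $x\sim 7$. Hence $x$'s only neighbors on the path are $1$ and $3$. Writing $N_i := X_1\cap N(i)$, the same kind of short-cycle argument shows $N_i\cap N_j=\emptyset$ whenever $|i-j|\in\{3,4\}$ (a common neighbor would create an induced $C_5$ or an induced $C_6$ of trinity~$0$), while Corollary~\ref{antiP5triple} gives $N_i\cap N_{i+2}\cap N_{i+4}=\emptyset$ for every valid $i$.

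Stage~2 cascades Lemma~\ref{antiP4} along the length-$7$ path. In the generic case where both $1$ and $3$ admit private $X_1$-neighbors beyond $x$, a four-step cascade of Lemma~\ref{antiP4} on $1234$, $2345$, $3456$, $4567$ forces the disjointness pattern $N_2\cap N_4 = N_3\cap N_5 = N_4\cap N_6 = N_5\cap N_7 = \emptyset$: each step uses the private neighbors produced by the preceding step, with $x$ itself playing the role of private neighbor of $3$ in the pair $(3,5)$. In the degenerate case (say $N_1\subseteq N_3$), every $X_1$-neighbor of $1$ replicates $x$'s restricted adjacency pattern on the path; the cascade then has to be restarted from a different window (typically $3456$), and again propagates a comparably rigid pattern of disjointness.

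Stage~3 extracts the contradiction via Fact~\ref{fact:richtype}. The rigid neighborhood pattern lets me pick three pairwise non-adjacent $X_1$-vertices $v_3,y,w$ with $v_3$ seeing only $3$, $y$ only $4$, and $w$ only $5$ on the path. The induced path $v_3,3,4,5,w$ has trinity~$1$ and its internal vertices lie in $X_0$, hence are independent of $X_2\subseteq N_{\ell+2}$ by the BFS structure. A second induced $v_3 w$-path is assembled through neighbors $3',5'\in N_{\ell-1}$ and an up-path $U_{3'5'}$; closing auxiliary cycles of the form $3\,3'\,U_{3'5'}\,5'\,5\,4$ in $X_0\cup N_{\ell-1}$ pins down the trinity of $U_{3'5'}$ exactly as in the proof of Lemma~\ref{antiP4}, yielding a second induced $v_3 w$-path of trinity~$2$. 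Both paths avoid $X_2$, so Fact~\ref{fact:richtype} is violated.

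The main obstacle is the unavoidable chord $23$ along $X_0$: the naive analog of the Corollary~\ref{antiP5triple} proof, closing a trinity-$0$ $C_6$ through $x$ and a putative common neighbor of $(2,4)$, is immediately blocked by this chord. This is precisely why one must cascade Lemma~\ref{antiP4} across all four length-$4$ windows of the path and extract the final contradiction via Fact~\ref{fact:richtype}, rather than via a single short cycle. The length~$7$ hypothesis is exactly what provides enough length-$4$ windows for the cascade.
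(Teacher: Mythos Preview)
Your overall scaffolding is reasonable, but Stage~3 has a genuine gap, and the detour you take is precisely what creates it.

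In Stage~2 you cascade Lemma~\ref{antiP4} in the \emph{forward} direction: from private neighbors of $1,3$ you deduce $N_2\cap N_4=\emptyset$, hence private neighbors of $2,4$, hence $N_3\cap N_5=\emptyset$, and so on. At the end you have established that \emph{no} two vertices at distance~$2$ on the path share an $X_1$-neighbor. Now in Stage~3 you try to close via Fact~\ref{fact:richtype}: you exhibit the trinity-$1$ path $v_3\,3\,4\,5\,w$ and claim a trinity-$2$ path through $3',U_{3'5'},5'$. But the cycle $3\,3'\,U_{3'5'}\,5'\,5\,4$ only excludes $u_{3'5'}=2$; both $u_{3'5'}=0$ and $u_{3'5'}=1$ remain. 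When $u_{3'5'}=0$, your second path $v_3\,3\,3'\,U_{3'5'}\,5'\,5\,w$ has length $\equiv 1\pmod 3$ as well, and Fact~\ref{fact:richtype} gives nothing. You write ``exactly as in the proof of Lemma~\ref{antiP4}'', but there the case $u=0$ is dispatched by lifting to $X_2$ through a vertex $y$ adjacent to \emph{both} endpoints of an edge of the base path (positions $2$ and $4$). Your cascade has just shown that no such common neighbor exists, so that escape route is blocked. The vertex $y$ you introduce sees only~$4$, and any attempt to route through it (e.g.\ $v_3\,3\,3'\,U\,5'\,5\,4\,y$) picks up the chord $34$. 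The degenerate case in Stage~2 is also only sketched.

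The paper's argument avoids all of this by cascading Lemma~\ref{antiP4} in the \emph{contrapositive} direction. From the common neighbor $x$ of $1,3$ one gets, successively, common neighbors $y,z,t,u$ of $(2,4),(3,5),(4,6),(5,7)$. Corollary~\ref{antiP5triple} then shows $x\not\sim 5$ and $u\not\sim 3$, so $x,u$ are private $X_1$-neighbors of $3,5$ on the window $3456$; Lemma~\ref{antiP4} now says $4,6$ have no common $X_1$-neighbor, contradicting~$t$. No Stage~3, no up-paths, no Fact~\ref{fact:richtype}: the contradiction is internal to $X_1$ and to Lemma~\ref{antiP4} itself.
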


\begin{proof}
If $x\in X_1$ dominates $1,3$, by Lemma~\ref{antiP4}, there 
is $y\in X_1$ dominating $2,4$, and then $z\in X_1$ dominating $3,5$
and then $t\in X_1$ dominating $4,6$, and finally $u\in X_1$ dominating $5,7$.
By Corollary~\ref{antiP5triple}, $x$ does not see 5 and 
$u$ does not see 3. But then $3456$ and $x,t,u$ contradict
Lemma~\ref{antiP4}.
\end{proof}

\begin{theorem}\label{final}
Trinity graphs have bounded chromatic number.
\end{theorem}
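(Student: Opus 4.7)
The plan is to combine the reduction chain built earlier in the paper with the corollaries of the last section. I will proceed in three steps.

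First I will invoke the successive reductions of Theorems~\ref{C5}, \ref{extendedC5}, and~\ref{excludingC5}. Each of these transfers a bounded-chromatic guarantee from a more restrictive class of trinity graphs to a broader one: from $C_5$-free to extended-$C_5$-free, then to doubly-extended-$C_5$-free, then to all trinity graphs. So it is enough to prove that $C_5$-free trinity graphs have bounded chromatic number.

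Second, for a connected $C_5$-free trinity graph $G$ of large chromatic number, I fix a root $r$ and apply Theorem~\ref{shadow} to obtain a shadow or an antishadow. Theorem~\ref{noshadowC5} forbids shadows, so an antishadow $X_0,X_1,X_2,X_3,X_4$ must exist. The task becomes to contradict the existence of such an antishadow.

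Third, since $X_0$ is a triangle-free graph with large chromatic number, the Gy\'arf\'as argument used in the proof of Theorem~\ref{th:TCP} gives a long induced path $v_1v_2\ldots v_n$ in $X_0$. Corollary~\ref{nottwo} applied to every length-$7$ subpath says that $v_i$ and $v_{i+2}$ never share an $X_1$-neighbor; the $C_5$-free hypothesis rules out a shared $X_1$-neighbor for $v_i,v_{i+3}$; and a shared neighbor for $v_i,v_{i+4}$ would produce an induced $C_6$, hence trinity~$0$. Therefore any chosen neighbors $x_i\in X_1\cap N(v_i)$ are pairwise distinct over any window of five consecutive indices. Using the richness of $X_1$ in $X_2$ and the type analysis of rich pairs (Facts~\ref{fact:richtype} and~\ref{fact:asymmetricut}), I will find $i<j$ and a common $X_2$-neighbor $z$ of $x_i,x_j$ inside a typical block, and close the induced cycle
\[
v_iv_{i+1}\ldots v_jx_jzx_i
\]
of length $j-i+4$, choosing $j-i$ so that this length is divisible by~$3$ — for instance $j-i=2$ when $x_ix_{i+2}$ has type~$2$ (giving a $6$-cycle) or $j-i=3$ with a length-$4$ connecting path through $X_2$ (giving a $9$-cycle) otherwise. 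Any such induced cycle has trinity~$0$, contradicting the trinity hypothesis.

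The main obstacle will be guaranteeing that the closing cycle is induced. No chord can come from $z$ (since $X_2$ lies at a different BFS level from $X_0$) or within $X_1$ (stability), and Corollary~\ref{nottwo} together with triangle-freeness eliminates the short-range chords from $x_i$ or $x_j$ to interior path-vertices. The remaining possibilities, where $x_i$ or $x_j$ has further neighbors on $P$ inside the window, will be ruled out by a more careful choice of $i,j$ and block — using the freely-closable property of the antishadow and Fact~\ref{fact:asymmetricut} to replace blocks or shift indices as needed.
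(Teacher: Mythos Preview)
Your Steps~1 and~2 are correct and match the paper exactly: the reduction chain (Theorems~\ref{C5}--\ref{excludingC5}) plus Theorems~\ref{shadow} and~\ref{noshadowC5} do reduce the problem to ruling out an antishadow in a $C_5$-free trinity graph.

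Step~3, however, is not a proof. Two concrete problems:

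\textbf{(a) The closing cycle is miscounted.} You write the cycle as $v_iv_{i+1}\cdots v_jx_jzx_i$ ``of length $j-i+4$'', treating $z$ as a single common $X_2$-neighbour. But ``type~$2$'' means the induced $x_ix_j$-paths through the rich blocks have \emph{trinity}~$2$, not length~$2$; nothing guarantees a common neighbour. Your alternative ``$j-i=3$ with a length-$4$ connecting path through $X_2$'' is likewise unjustified: no lemma in the paper produces a path of prescribed length (only prescribed trinity) in a block. What you can legitimately do is combine the trinity of the $X_0$-path $x_iv_i\cdots v_jx_j$ with the type of $x_ix_j$; this forces, e.g., type~$2$ when $j-i=2$ and type~$1$ when $j-i=3$. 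But if all pairs $x_ix_{i+2}$ have type~$1$ and all pairs $x_ix_{i+3}$ have type~$2$, neither case fires, and you have not said what happens then.

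\textbf{(b) The residual case analysis is absent.} The last paragraph (``will be ruled out by a more careful choice of $i,j$ and block\ldots using the freely-closable property and Fact~\ref{fact:asymmetricut}'') is where the actual work would have to occur, and none of it is done. In particular you never use the freely-closable property, and you never set up a cut configuration to which Fact~\ref{fact:asymmetricut} applies.

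The paper's proof diverges from yours precisely here: instead of a long induced path in $X_0$, it takes an induced \emph{odd cycle} $C$ (available since $\chi(X_0)$ is large, and of length $\ge 7$ since $G$ is $C_5$-free and triangle-free). The cycle is the crucial device: for any two $X_1$-neighbours $x,y$ of vertices on $C$ there are \emph{two} induced $xy$-paths through $C$ (one each way round), and because $|C|$ is odd their trinities differ, so Fact~\ref{fact:richtype} applies directly. The paper then splits on whether $C$ has $|C|$ private $X_1$-neighbours or some $y\in X_1$ sees two vertices $1,k$ of $C$, and runs a short but genuine case analysis (on the trinity of $|C|$, on the trinity of $k$, and on a neighbourhood-inclusion relation in $X_2$) to reach a trinity-$0$ cycle in every branch. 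Your path-based approach has no analogue of the ``two ways around'' trick, and without it the type/cut bookkeeping you allude to would have to be carried out explicitly---which you have not done.
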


\begin{proof}
Assume for contradiction that this is not the case, so by
Theorems~\ref{C5},~\ref{extendedC5},~\ref{excludingC5}, we can assume that there 
exists a $C_5$-free trinity graph $G$ with 
large chromatic number. By Theorem~\ref{noshadowC5},
we can assume that $G$ has an antishadow $X_0,X_1,X_2,X_3,X_4$.
Let $C$ be an induced odd cycle in $X_0$ (thus with at least 
7 vertices). By Corollary~\ref{nottwo}, $i,i+2$ on $C$ do
not have a common neighbor in $X_1$. Assume 
for contradiction that $C$ is minimally dominated by $|C|$
private vertices in $X_1$. If $C$ has trinity 2, consider (private) 
neighbors $x,y$ of $1,3$ in 
$X_1$. In $C$, $x,y$ can be joined by induced paths of length 1 and 2,
a contradiction to Fact~\ref{fact:richtype} in $X_2$. If $C$ has trinity 1, 
consider (private) neighbors $x,y$ of $1,4$ in 
$X_1$. If $x,y$ have respective private neighbors $x',y'$ in 
$X_2$, using $C$ and $x,y$, the vertices $x',y'$ 
can be joined by induced paths of trinities 1 and 2,
a contradiction to Fact~\ref{fact:richtype} in $X_3$. So $x,y$ do 
not have private neighbors 
in $X_2$, so we have $N(x)\cap X_2\subseteq N(y)\cap X_2$ or
$N(y)\cap X_2\subseteq N(x)\cap X_2$. We denote respectively 
these two cases by $1\rightarrow 4$ or $4\rightarrow 1$.
More generally we always have $i\rightarrow i+3$ or $i+3\rightarrow i$.
If $C$ has more than seven vertices, we conclude by
considering $z,t$ the respective (private) neighbors of $5,8$. There is
a vertex $z'$ in $X_2$ joined to $z,t$. Note that $x'\neq z'$, that 
the trinity of $x'y45zz'$ is 2, and that the trinity of $z't89\dots 1xx'$ 
is 1. This is a contradiction to Fact~\ref{fact:richtype} in $X_3$. Now if $C$ has 7 vertices,
observe that there exists $i$ such that $i\rightarrow i+3\rightarrow i+6$
(or equivalently $i+6\rightarrow i+3\rightarrow i$),
thus there exists a vertex in $X_2$ which is a neighbor of the
private neighbors of $i,i+3,i+6$, hence inducing a $C_5$ 
using $i,i+6$.

Thus there is a vertex $y$ in $X_1$
dominating two vertices of $C$, for instance $1$ and $k$. Observe that
since $k>3$, we have $k>5$ to avoid $C_5$ and $C_6$. We assume that $1,k$ are
chosen as close as possible among all possible choices of $y$. Note that $k$
does not have trinity 2 since $1\dots ky$ is an induced cycle.
If $k$ has trinity 0, take $w$ a neighbor of 3 in $X_1$,
and consider the paths $y123w$ and $w34\dots ky$
with trinities 1 and 2, a contradiction to Fact~\ref{fact:richtype}.
So $k$ has trinity 1.
Let $x,z,t,u\in X_1$ be the respective neighbors 
of $0,2,k-1,k+1$. All vertices $x,y,z,t,u$ are distinct to avoid $C_6$,
and they form a stable set since $X_1$ has no edges. If $x,t$
have private neighbors $x',t'$ in $X_2$, we obtain two paths
$x'x01yk(k-1)tt'$ and $x'x01\dots(k-1)tt'$ with trinities 2 and 1,
a contradiction to Fact~\ref{fact:richtype}. So there is $x'$ in $X_2$ neighbor of 
$x,t$. Similarly, there is $z'$ in $X_2$ neighbor of 
$z,u$. Note that $x'\neq z'$ to avoid the 6-cycle $x'x012z$. 
If $x'z'$ is an edge, 
we have the induced 9-cycle $x'z'u(k+1)ky10x$. 
Finally, if $x'z'$ is not an edge, we have the 12-cycle
$x'x012zz'u(k+1)k(k-1)t$, and this is our final contradiction.
\end{proof}

\bibliographystyle{plain}
 
\end{document}